\declaretheorem[
]{theorem}
\newcommand{\bh}[1]{\hat{\pmb{#1}}} 
\newcommand{\bs}[1]{\pmb{#1}} 
\setlist[enumerate,1]{label={(\roman*)}} 
\setlist{nolistsep} 
\begin{document}
\begin{CJK}{UTF8}{gbsn}
\renewcommand{\labelenumii}{\theenumii}
\renewcommand{\theenumii}{\theenumi.\arabic{enumii}.}

\newcommand{\QuICS}{Joint Center for Quantum Information and Computer Science, National Institute of Standards and Technology and
 University of Maryland, College Park, Maryland 20742, USA}
\newcommand{\JQI}{Joint Quantum Institute, National Institute of Standards and Technology and
 University of Maryland, College Park, Maryland 20742, USA}
\newcommand{\USNA}{Department of Physics, United States Naval Academy, Annapolis, MD 21402, USA}
\newcommand{\harvard}{Department of Physics, Harvard University, Cambridge, MA 02138, USA}

\newcommand{\thetitle}{Optimally learning functions in interacting quantum sensor networks}

\title{\thetitle}

\author{Erfan~Abbasgholinejad}\thanks{These two authors contributed equally.}
\affiliation{\QuICS}
\affiliation{\JQI}
\author{Sean~R.~Muleady}\thanks{These two authors contributed equally.}
\affiliation{\QuICS}
\affiliation{\JQI}
\author{Jacob~Bringewatt}
\affiliation{\USNA}
\affiliation{\harvard}
\author{Anthony~J.~Brady}
\affiliation{\QuICS}
\affiliation{\JQI}
\author{Yu-Xin Wang (王语馨)}
\affiliation{\QuICS}
\author{Ali~Fahimniya}
\affiliation{\QuICS}
\affiliation{\JQI}
\author{Alexey~V.~Gorshkov}
\affiliation{\QuICS}
\affiliation{\JQI}

\date{\today}

\begin{abstract}

Estimating extensive combinations of local parameters in distributed quantum systems is a central problem in quantum sensing, with applications ranging from magnetometry to timekeeping. While optimal strategies are known for sensing non-interacting Hamiltonians in quantum sensor networks, fundamental limits in the presence of uncontrolled interactions remain unclear. Here, we establish optimal bounds and protocols for estimating a linear combination of local parameters of Hamiltonians with arbitrary, unknown interactions. In the process, we more generally establish bounds for learning any linear combination of Hamiltonian coefficients for arbitrary, commuting terms. Our results unify and extend existing bounds for non-interacting qubits and multimode interferometers, providing a general framework for distributed sensing and Hamiltonian learning in realistic many-body systems.

\end{abstract}

\maketitle
\end{CJK}

Precise characterization of the dynamics of a physical system is a fundamental task in the natural sciences. For a quantum system, accurate determination of the underlying Hamiltonian underpins diverse tasks including quantum sensing and metrology \cite{degen_quantum_2017, pirandola_advances_2018}, characterization and control of many-body systems, and the reliable operation of quantum computing devices \cite{preskill_quantum_2018}. In many relevant scenarios, desired information regarding the system is distributed across a network of quantum sensors, such as for ensembles of spatially proximate qubits in atomic, molecular, optical and solid-state platforms. While these systems often couple to local, external fields in known ways, they can also experience many-body interactions that are difficult to isolate or characterize directly; realizations include color center ensembles \cite{bar2013solid,Choi2020:DROIDtheory,Zhou2020:IntSpins, Gao2025:AmpEcho, Wu2025:SpinSqueezing}, neutral atom and polar molecule arrays interacting via dipolar or collisional interactions~\cite{chomaz_dipolar_2023,carroll_observation_2025}, and superconducting or Rydberg-based processors exhibiting crosstalk \cite{bluvstein_logical_2024,saxena2024boundary}. As such, efficiently extracting useful information in realistic many-body settings remains a key challenge.

A general task in distributed quantum systems is the estimation of extensive observables, e.g. linear combinations of local fields~\cite{eldredge_optimal_2018,proctor2017networked,proctor_multiparameter_2018,suzuki2020quantum,ehrenberg_minimum_2022,bringewatt_optimal_2024}, which arise naturally in gradient estimation~\cite{bate_experimental_2025,bothwell_resolving_2022}, clock network synchronization~\cite{komar_quantum_2014}, or spatial field inference~\cite{eldredge_optimal_2018}. More generally, many sensor network problems also reduce to this task, including estimating one or more analytic functions~\cite{qian2019heisenberg,bringewatt_protocols_2021} or learning linear combinations of correlated parameters~\cite{qian_optimal_2021,hamann2022approximate}. While optimal strategies and bounds are well established for sensing \emph{non-interacting} Hamiltonians~\cite{eldredge_optimal_2018,ehrenberg_minimum_2022}, these remain unknown for Hamiltonians with uncontrolled interactions. Prior studies examine bounds and strategies for encoding a \emph{single} parameter in an interacting Hamiltonian~\cite{boixo2007generalized,Zhou2020:IntSpins,Puig2025:2bodyIntSens}, but no framework exists for the broader multi-parameter problem.

Here, we solve this problem in a general setting where each of $n$ qubit sensors couples to an unknown local parameter through a known generator, while also experiencing arbitrary, unknown, time-independent interactions. The goal is to estimate a known linear function of the local parameters using quantum dynamics together with arbitrary control, including the use of known interactions, ancilla qubits, and adaptive protocols---a problem we refer to as ``Interacting Sensing'' (\cref{fig:fig1}~(a)). By leveraging Hamiltonian reshaping techniques to remove all non-commuting interaction terms,~\cite{chen_concentration_2021, campbell_random_2019, ma_learning_2024}, we reduce this to another problem we term ``Diagonal Learning''. This consists of estimating a linear function of unknown parameters coupled to a probe state via arbitrary commuting Hamiltonian terms. We derive optimal bounds, and offer a corresponding optimal strategy, which does not require full tomography of the system or learning of the individual Hamiltonian terms. Beyond offering a solution to the Interacting Sensor problem, our setup for the Diagonal Learning problem generally subsumes prior setups for function estimation in distributed quantum systems, including for networks of non-interacting qubits~\cite{eldredge_optimal_2018,ehrenberg_minimum_2022} or Mach-Zehnder interferometers~\cite{proctor_multiparameter_2018,ge_distributed_2018,Xia2020:RadioQSN,Guo2020:DQSphase,Liu2021:DQSphase,Kim2024:dqsPhase,bringewatt_optimal_2024}. Our work broadly establishes a framework for understanding of the limits of sensing extensive quantities in multi-parameter, many-body quantum systems.

\begin{figure}
\includegraphics[width=0.49\textwidth]{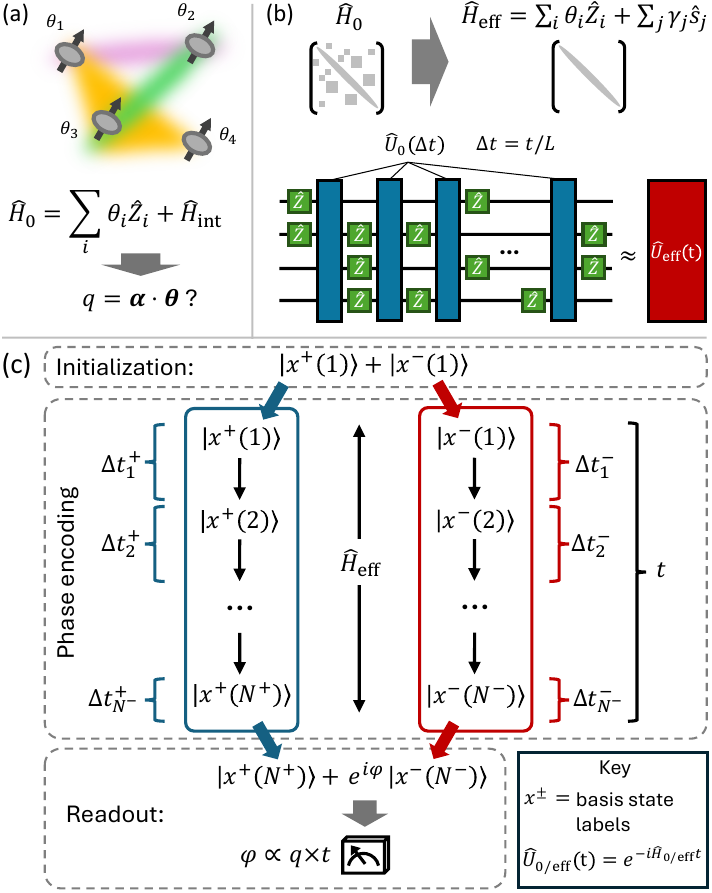}
\caption{(a) Schematic of an interacting sensor network, where each of $n$ qubits couples to a unique phase $\theta_i$, with \emph{unknown}, arbitrary multi-body interactions between them, $\hat{H}_{\rm int}$. The task of the Interacting Sensing problem is to learn a linear function $q = \bs{\alpha}\cdot\bs{\theta}$ of the phases for some real $\bs{\alpha}$. (b) To convert the Interacting Sensing problem to the Diagonal Learning problem, we first utilize randomized Trotter evolution under the non-diagonal Hamiltonian $\hat{H}_0$, conjugated by random Pauli-$Z$ strings. This engineers an effective, diagonal Hamiltonian $\hat{H}_{\rm eff}$, consisting of a sum over $\rm{poly}(n)$ Pauli-$Z$ strings $\hat{s}_j\in \{\hat{I},\hat{Z}\}^{\otimes n}$, so we have effective evolution $\hat{U}_{\rm eff}(t) = \exp\{-i\hat{H}_{\rm eff}t\}$. (c) Sketch of our optimal GHZ protocol for the Diagonal Learning problem, consisting of evolution via $\hat{U}_{\rm eff}(t)$, where $\hat{H}_{\rm eff}$ more generally may include any number of Pauli-$Z$ strings, interspersed with coherent switches between computational basis states for either the left (blue) or right (red) half of our initial superposition. The sequence of computational basis states $\ket{x^\pm(\mu)}$ and corresponding time intervals $\Delta t_\mu^\pm$ for the left/right halves of the superposition are selected to engineer a maximal total phase difference $\varphi \propto q \times t$, which can then be measured.}
\label{fig:fig1}
\end{figure}

\textit{Interacting Sensing and Diagonal Learning.---}For both Interacting Sensing and Diagonal Learning, the task is to measure a linear function $q=\bs{\alpha}\cdot\bs{\theta}$, where $\bs{\alpha}\in\mathbb{R}^m$ is a known vector of coefficients and $\bs{\theta}\in\mathbb{R}^m$ is a vector of unknown parameters encoded into a quantum state $\hat{\rho}_{\bs{\theta}}$.
The encoding process and number of unknown parameters differ between the two settings. For Interacting Sensing, we consider $n$ qubit sensors, each coupled to their own unknown parameter $\theta_i$ for $i = 1,..,n$ (so $m=n$). These parameters are encoded into a probe state $\hat{\rho}_0$ via unitary evolution for a time $t$ via the Hamiltonian
\begin{align}\label{eq:H_interacting_sensing}
\hat{H}(s) &= \hat{H}_0 + \hat{H}_{c}(s), \quad \hat{H}_0 = \sum_{i=1}^n \theta_i \hat{Z}_i + \hat{H}_{\text{int}},
\end{align}
where $\hat{Z}_i$ is the Pauli-$Z$ operator for the $i$-th qubit\footnote{Each $\theta_i$ may couple to any arbitrary axis on the Bloch sphere; however, we can always map this to $Z$ without loss of generality by appropriately redefining our local basis.}, $\hat{H}_{c}(s)$ for $s\in[0,t]$ is a freely-chosen $\bs{\theta}$-independent control Hamiltonian, potentially coupled to arbitrary number of ancilla qubits. $\hat{H}_{\text{int}}$ is an unknown, time- and $\bs{\theta}$-independent interaction Hamiltonian\footnote{In fact, $\hat{H}_{\text{int}}$ need not be $\bs{\theta}$-independent for our protocol to work, though its optimality is no longer guaranteed.} of the form
\begin{equation}
 \hat{H}_{\text{int}} = \sum_{j} \gamma_j \hat{P}_j, 
\end{equation}
where $\hat{P}_j$ are some arbitrary set of $O (\text{poly}(n))$ Pauli strings on $n$ qubits, $\hat{P}_j \notin \{\hat{Z}_i\}$. We assume the generators of the interaction Hamiltonian are known based on the sensor architecture, but the coupling strengths $\gamma_j$ are unknown. For example, ${\hat{P}_j}$ could include all weight-2 and weight-3 Pauli strings, corresponding to all possible two-body and three-body interactions.

In the Diagonal Learning problem, there are $m \leq 2^n-1$ unknown parameters $\bs{\theta}\in\mathbb{R}^m$ coupled to a probe state $\hat{\rho}_0$ for a time $t$ via the Hamiltonian
\begin{equation}\label{eq:H_stabilizer_sensing}
\hat{H}(s) = \hat{H}_\mathcal{S} + \hat{H}_c(s), \quad \hat{H}_\mathcal{S} = \sum_{j=1}^m\theta_j \hat{s}_j,
\end{equation}
where the generators $\hat{s}_j$ are any \emph{non-identity} elements in the stabilizer group of Pauli-$Z$ strings $\mathcal{S} = \{ \hat{I},\hat{Z}\}^{\otimes n}$ on $n$ qubits, 
and $\hat{H}_c(s)$ is again an arbitrary $\bs{\theta}$-independent control Hamiltonian.

\textit{Interacting Sensing Reduces to Diagonal Learning.---}Interacting Sensing can be reduced to Diagonal Learning by applying a Hamiltonian reshaping technique that transforms a general Hamiltonian into an effective Hamiltonian $\hat{H}_{\text{eff}}$ whose terms all commute \cite{ma_learning_2024}. This reshaping is particularly useful for quantum sensing and learning tasks, where commuting or diagonal Hamiltonians allow for simpler estimation strategies and more accessible information extraction. 
The key idea is to randomly apply Pauli-$Z$ operations between short time steps of evolution under $\hat{H}_0$, effectively projecting the dynamics onto a subspace generated by commuting operators (\cref{fig:fig1}~(b)).
This reshaping strategy has a natural connection to Hamiltonian simulation, where the goal is typically to approximate the time evolution under a complex Hamiltonian using simpler operations. In our case, we are interested in the reverse direction: transforming a complex, interacting Hamiltonian into one with a simpler structure that is easier to learn from. 

In particular, observe that, by randomly conjugating a Hamiltonian $\hat{H}$ with uniformly sampled elements from the stabilizer group $\mathcal{S}$, and averaging the result, we obtain an effective Hamiltonian that retains only those terms in $\hat{H}$ that commute with all elements of $\mathcal{S}$. Formally (see also Lemma 1 in Ref.~\cite{ma_learning_2024}):
\begin{equation}\label{eq:proj_lemma}
 \frac{1}{N} \sum_{\hat{s} \in \mathcal{S}} \hat{s} \hat{P} \hat{s} =
 \begin{cases}
 \hat{P}, & \text{if } \hat{P} \in \mathcal{S}, \\
 0, & \text{otherwise},
 \end{cases}
\end{equation}
where $N := 2^n$ is the cardinality of $\mathcal{S}$, and $\hat{P}$ is an arbitrary Pauli string.

Applying \cref{eq:proj_lemma} to $\hat{H}_0$ from \cref{eq:H_interacting_sensing}, the effective reshaped Hamiltonian becomes
\begin{equation}\label{eq:Heff}
 \hat{H}_{\text{eff}} = \frac{1}{N} \sum_{\hat{s} \in \mathcal{S}} \hat{s} \hat{H}_0 \hat{s} = \sum_{i=1}^n \theta_i \hat{Z}_i + \sum_{j\,|\,\hat{P}_j \in \mathcal{S}/\{\hat{Z}_i\}} \gamma_j \hat{P}_j,
\end{equation}
for some unknown coefficients $\gamma_j$, which contains only mutually commuting terms and is diagonal in the computational basis. Hence, this procedure can also be understood as a method that removes all off-diagonal (non-commuting) elements from $\hat{H}_0$ while preserving the generators that encode the parameters of interest. Crucially, observe that \cref{eq:Heff} is of the same form as $\hat{H}_\mathcal{S}$ in \cref{eq:H_stabilizer_sensing}, indicating that, if we can perform the Hamiltonian reshaping, we have reduced Interacting Sensing to Diagonal Learning.

Each term in $\hat{H}_{\text{eff}}$ is of the form $\hat{s} \hat{H}_0 \hat{s}$ for some $\hat{s} \in \mathcal{S}$. Since, in the Interacting Sensing problem, we have access to time evolution under $\hat{H}_0$, we can effectively simulate evolution under each $\hat{s} \hat{H}_0 \hat{s}$ by sandwiching the evolution with $\hat{s}$. This allows us to implement a Trotterized or QDRIFT-like protocol \cite{berry_time-dependent_2020, ma_learning_2024, chen_concentration_2021, campbell_random_2019} where we randomly sample $\hat{s} \in \mathcal{S}$ and perform the evolution
\begin{align}
\hat{V} &= \prod_{k=1}^L \hat{s}_ke^{-i \hat{H}_0 \Delta t}\hat{s}_k
= \prod_{k=1}^L e^{-i \hat{s}_k \hat{H}_0 \hat{s}_k \Delta t} ,
\end{align}
where $\hat{s}_k$ are sampled uniformly from $\mathcal{S}$ and $\Delta t = t / L$. Throughout this manuscript, we use a convention where $\prod_{k=1}^L \hat O_k = \hat O_L \dots \hat O_1$. The applications of stabilizers $\hat{s}_k$ between time steps can be implemented via the control Hamiltonian $\hat{H}_c(s)$ in \cref{eq:H_interacting_sensing}, which we let be zero, except for negligibly short periods of time when it is used to apply gates. Although $\mathcal{S}$ contains $2^n$ elements, it has been shown in prior work \cite{chen_concentration_2021} that, for approximating the target unitary
\begin{equation}
 \hat{U} = e^{-i \hat{H}_{\text{eff}} t}
\end{equation}
to additive error $\epsilon$ (i.e. $\|\hat{U}-{\hat{V}}\| \leq \epsilon$ for spectral norm $\|\cdot\|$) 
with probability $1-\delta$, the number of Trotter steps should satisfy
\begin{equation}
 L \gtrsim C\left((n\lambda^2t^2/\epsilon^2)\log(1/\delta)\right),
\end{equation}
for some constant $C$, where $\lesssim$ denotes an asymptotic bound for large $L$ and where $\lambda = \|\hat{H}_0\| = O(\mathrm{poly}(n))$. Thus, only $O(\mathrm{poly}(n))$ Trotter steps suffice to approximate the full evolution under $\hat{H}_{\text{eff}}$ to arbitrarily high precision. Therefore, this method enables efficient reduction of Interacting Sensing to Diagonal Learning up to an arbitrarily small error $\epsilon$ with high probability.

We briefly note that one possible alternative to the above method is to instead continuously apply strong $\hat{Z}$ fields to each qubit, where the corresponding field strengths are incommensurate with each other. For a suitable choice of sufficiently strong fields, this ensures that in the combined local interaction frame the applied fields, any Pauli string in $\hat{H}_{\rm int}$ containing at least one $\hat{X}$ or $\hat{Y}$ operator will time-average to $0$, to first order in the rotating wave approximation.

Next, we turn to optimal protocols for Diagonal Learning. Once these protocols are established, we return to the role of the error $\epsilon$ and how it affects the overall precision of an estimate of $q$ in the Interacting Sensing setting.

\textit{Optimal Diagonal Learning.---}Here, we present a class of optimal protocols that provide an unbiased estimate for $q=\bs{\alpha}\cdot\bs{\theta}$ in the Diagonal Learning problem. These protocols are optimal in the sense that they leverage probe states that saturate the quantum Cram\'er-Rao bound~\cite{helstrom1976quantum,Paris2009:QuMetrology,holevo2011probabilistic}, which provides the fundamental lower bound on the variance of an unbiased estimator $q_{\rm est}$ of $q$. We leave the detailed derivation of the bound to \cref{app:opt_protocol}, and simply introduce the result. 

First, we define some preliminary notation: let $\bh{s}'$ denote \emph{all} $N = 2^n$ stabilizers in $\mathcal{S}$, indexed in such a way that $\hat{s}_j' = \hat{s}_j$ for $1 \leq j \leq m$, and $\hat{s}'_0 = \hat{I}^{\otimes n}$. Also, let $\ket{x}$ for $x \in \{0,1\}^{\otimes n}$ denote computational basis states in the $N$-dimensional Hilbert space of the Diagonal Learning problem. We then have the $N \times N$ (i.e.~$2^n \times 2^n$) matrix $h'$ of stabilizer eigenvalues such that $h'_{j,x} := \braket{x|\hat{s}_j'|x}$, where we label the column index in binary using the bit string $x$. Up to a permutation of the rows and columns, this is exactly the $2^n \times 2^n$-dimensional Hadamard transformation matrix
\begin{align}\label{eq:hadamard}
 H = \begin{pmatrix}
 1 & 1 \\ 1 & -1
 \end{pmatrix}^{\otimes n},
\end{align}
which is equivalent to an $n$-dimensional discrete Fourier transform over $\{0,1\}^{\otimes n}$.

Now, let $\hat{q}_{\rm est}$ be an unbiased estimator for $q$, i.e. for $\delta \hat{q} := \hat{q}_{\rm est} - q$, then $\braket{\delta \hat{q}} = 0$ where $\braket{\cdot}$ denotes the expectation over measurement outcomes. We can now state our result:
\begin{align}\label{eq:Bnd}
 \braket{\delta \hat{q}^2} \geq \min_{\bs{a}\in \mathcal{A}}\frac{\norm{\bs{a}}_1^2}{4t^2}
\end{align}
where $\norm{\bs{a}}_p = (\sum_x |a_x|^p)^{(1/p)}$ is the $\ell_p$-norm. Here, $\mathcal{A}$ is the set of Hadamard transformed vectors $\bs{a} := (h')^T\bs{\alpha}'/N$, where $\bs{\alpha}':=(0,\bs{\alpha},\tilde{\bs{\alpha}})\in\mathbb{R}^N$ for $\tilde{\bs{\alpha}}$ an arbitrary vector $\in\mathbb{R}^{N-m-1}$, and the minimization is performed over all choices for $\tilde{\bs{\alpha}}$. We note that the inverse transformation is given by $\bs{\alpha}' = h' \bs{a}$.

The crucial dependence of the optimal precision on the Hadamard transformation of the coefficient vector of the desired function $\bs{\alpha}$ implies close connections to the classical theory of compressed sensing \cite{foucart2013invitation,donoho2006compressed}. In general, the minimization problem of $\norm{\bs{a}}_1$ over all choices for $\tilde{\bs{\alpha}}$ amounts to a basis pursuit problem \cite{chen2001atomic}, which can be solved via various linear programming techniques. For the limiting case where $m=N-1$ (i.e.~our generators consist of \emph{all} non-trivial Pauli-Z strings), $\bs{\alpha}'$ is uniquely determined by $\bs{\alpha}$, and so $\mathcal{A}$ consists of only a single element, and thus no optimization is required.

We now offer a protocol whose sensitivity saturates the bound in Eq.~\eqref{eq:Bnd}. Given an optimal choice of $\bs{a}\in \mathcal{A}$ (with minimal $\ell_1$-norm), form two disjoint sets of bit strings: $\Lambda^+ := \{x\,|\,a_x > 0\}$ and $\Lambda^- := \{x\,|\,a_x< 0\}$. The protocol consists of coherently switching during the encoding process between GHZ-like superposition states of the form $\ket{\psi_{x,y}(\theta)} \equiv (\ket{x} + e^{i\theta}\ket{y})/\sqrt{2}$ for $x\in\Lambda^+$ and $y\in\Lambda^-$. The switching operations are performed at appropriate times---specified below---in order to engineer a relative phase difference between the two halves of the superposition proportional to $q$. The required unitary switching operations $\hat{S}_{x,x'}$ ($\hat{S}_{y,y'}$) take $\ket{x}$ to $\ket{x'}$ ($\ket{y}$ to $\ket{y'}$) when applied to the state $\ket{\psi_{x,y}(\theta)}$, without affecting the other half of the superposition; for example, the unitaries
\begin{align}
 \hat{S}_{x,x'} = \ket{x}\bra{x'} + \ket{x'} \bra{x} + \sum_{x''\neq x,x'} \ket{x''}\bra{x''}
\end{align}
exchange basis states labeled by $x$, $x'$ while acting trivially on all other states.
For the state $\ket{\psi_{x,y}(\theta)}$ with $x' \neq y$ ($y' \neq x$), $\hat{S}_{x,x'}$ ($\hat{S}_{y,y'}$) can be effectively executed by a sequence of CNOT or CNOT$_0$---CNOT with control on $\ket{0}$---gates on qubits where $x$ and $x'$ ($y$ and $y'$) differ. These gates may be conditioned on either 1) any qubit whose current value differs between the two halves of the superposition or 2) an ancilla register, where the relevant joint state is $\ket{x}\otimes\ket{0} + e^{i\theta}\ket{y}\otimes\ket{1}$.
Note that if $x,x'\in\Lambda^+$ ($y,y'\in\Lambda^-$), $\hat{S}_{x,x'}$ ($\hat{S}_{y,y'}$) can change only the left (right) branch of the superposition in $\ket{\psi_{x,y}(\theta)}$.

Now, let $x^\pm(\mu) \in \Lambda^\pm$ be enumerations of the states in $\Lambda^\pm$, where the index $\mu$ runs over $1 \leq \mu \leq N^\pm$ for $\Lambda^\pm$, with $N^\pm := |\Lambda^\pm|$. Our optimal protocol is as follows: first, prepare a probe state $\ket{\psi_{x^+(1),x^-(1)}(0)}$. Then, while coupled to the Hamiltonian in \cref{eq:H_stabilizer_sensing} for total time $t$, use the control Hamiltonian to apply the operations $\hat{S}_{x^+(\mu),x^+(\mu+1)}$ in sequence after subsequent durations of length $\Delta t_\mu^+=2t(a_{x^+(\mu)}/{\norm{\bs{a}}_1})$, for $\mu = 1$ to $N^+-1$. Meanwhile, we also apply the operations $\hat{S}_{x^-(\nu),x^-(\nu+1)}$ in sequence after subsequent durations of length $\Delta t_\nu^-=2t(|a_{x^-(\nu)}|/{\norm{\bs{a}}_1})$, for $\nu = 1$ to $N^--1$.

As $h'_{0,x}=1$ for all $x$, and $\alpha'_0=0$ (by assumption), it follows that $\sum_x a_x = 0$. Thus, $
\sum_{x\in \Lambda^+} a_x = \sum_{x\in\Lambda^-} |a_x| =\norm{\bs{a}}_1/2$, and so $\sum_\mu \Delta t_\mu^\pm = t$. In other words, our sequence of switching operations and corresponding evolution duration for each GHZ-like superposition are chosen so that the requisite evolution time of each half of the superposition is exactly $t$.

The overall result of these operations is the state $\ket{\psi_{x^+(N^+),x^-(N^-)}(\varphi)}$ (up to an irrelevant global phase), where, due to the timing of the switching operations, the relative phase $\varphi$ is proportional to the function $q$ of interest, as desired:
\begin{align}
\varphi
&=\frac{2t}{\norm{\bs{a}}_1}\sum_{j=1}^m\theta_j \left(h' \bs{a}\right)_j = \frac{2t}{\norm{\bs{a}}_1}q.
\end{align}

To measure this relative phase, we note that qubits whose state is identical between the left and right halves of the superposition are disentangled from the rest of the system. The relative phase on the remaining qubits may be read out, e.g. via a standard parity measurement~\cite{bollinger_optimal_1996,sackett_experimental_2000,eldredge_optimal_2018}\footnote{We note that in place of a multi-qubit parity measurement, one can instead utilize CNOT and CNOT$_0$ gates to disentangle the superposition state, transferring the relative phase to a single qubit that can be read out.}.
In the limit of small signal, $qt \rightarrow 0$, the resulting measurement sensitivity exactly saturates Eq.~\eqref{eq:Bnd}, so our protocol is optimal. We provide a schematic of the protocol in Fig.~\ref{fig:fig1}~(c). 

To account for the phase wrapping that occurs for large evolution times, one can use the standard technique of robust phase estimation~\cite{higgins_demonstrating_2009,berry_how_2009,Kimmel2015:RobustPhase,belliardo2020achieving} to obtain an estimate of $q$ that saturates the bound in \cref{eq:Bnd}, up to relatively small constants.

\textit{The universality of Diagonal Learning.---}Diagonal Learning subsumes all previous work on function estimation in quantum sensor networks with commuting generators, including non-interacting qubit sensors with local generators~\cite{eldredge_optimal_2018} and networks of Mach Zehnder interferometers~\cite{proctor_multiparameter_2018,bringewatt_optimal_2024}.
In \cref{app:opt_protocol}, we show that for Hamiltonians of the form
\begin{equation}
\hat{H}(s) = \sum_{j=1}^m \theta_j \hat{g}_j + \hat{H}_{c}(s),
\end{equation}
with commuting generators $\hat{g}_j$ in dimension $N$, the bound for estimating $q=\bs \alpha \cdot \bs \theta$ in Eq.~(\ref{eq:Bnd}) still holds, with the modification that the minimization of $\|\bs a\|_1$ is over $\mathcal{A} = \{\bs a \mid h_g \bs a = \bs \alpha, \bs{1}^T\bs{a} = 0\}$. Here, $h_g \in \mathbb{R}^{m \times N}$ is the matrix of eigenvalues of the generators in the basis $\{\ket{k}\}_{k=1}^N$ where the generators are diagonal, i.e., $(h_{g})_{j,k} = \bra{k}\hat{g}_j\ket{k}$. This minimization remains a linear program and generally can be solved in $O(\mathrm{poly}(m)N)$ time \cite{boyd2004convex}. As in the stabilizer setting, there exists an optimal protocol that achieves the bound for $q$, and the optimal protocol is determined by the optimal $\bs a$. Moreover, as shown in \cref{app:sparsity}, the optimal solution satisfies $\|\bs a\|_0 \leq m+1$. Thus for $m=\mathrm{poly}(n)$, the optimal $\bs{a}$ is relatively sparse, and we only require $\mathrm{poly}(n)$ coherent switching operations for the corresponding optimal protocol. While efficiently determining this optimal $\bs a$ in general remains open for exponentially large $N$ in many cases this optimization is analytically solvable. Furthermore, the search can be restricted to a subset of $\mathcal{A}$, within which we can still obtain a valid, but possibly suboptimal, protocol with known performance.

For instance, in \cref{app:universality}, we show that our bound in \cref{eq:Bnd} reproduces the known results for non-interacting qubit sensors~\cite{eldredge_optimal_2018} and for networks of Mach-Zehnder interferometers~\cite{proctor_multiparameter_2018,bringewatt_optimal_2024}. 
Moreover, in \cref{app:bos_reshaping}, we demonstrate that Hamiltonian reshaping techniques can be used to eliminate unwanted non-commuting interactions in bosonic systems. Our results not only provide a unified bound and protocol for previous studies of sensor networks with commuting generators but also extend them to networks of sensors in the presence of unknown, arbitrary interactions.

\textit{Interacting Sensing protocol.---}The reduction from Interacting Sensing to Diagonal Learning via Hamiltonian reshaping implies that we now also have an optimal protocol for Interacting Sensing. However, Hamiltonian reshaping comes with a approximation error, as we discussed above.
In \cref{app:reshaping_error}, we quantify how this error propagates into the estimation of the target parameter $q$ in the Interacting Sensing problem. Specifically, we examine the fate of our ideal estimator $\hat{q}_{\text{est}}$ when the Trotterized unitary is used in place of the ideal evolution. Using a perturbative expansion around the ideal evolution, we show that the resulting error in the estimation process depends on the number of Trotter steps $L$, the system size $n$, the operator norm of the Hamiltonian $\lambda$, and the optimal protocol parameters.

As detailed in \cref{app:reshaping_error}, we derive the following error bound for our estimator, which is now generally biased for any given realization our approximate evolution:
\begin{equation}
\mathbb{E}\left[\braket{\delta \hat{q}^2}\right] \lesssim \frac{\norm{\bs{a}}_1^2}{4t^2}\left\{1 +  C'\left(\frac{n\|\bs{a}\|_0 \lambda^2t^2}{L}\right)\right\}\label{eq:amse}
\end{equation}
for constant $C'$. Here, $\bs{a}$ is the optimal vector used in the protocol, and $\mathbb{E}[\cdot]$ is the expectation over all random sampling in our reshaping procedure. These expression quantifies how simulation error degrades the estimation performance if $L$ is too small.

To ensure this expression is not dominated by the reshaping error (i.e. the second term in brackets on the right in Eq.~\eqref{eq:amse}), we must choose the number of Trotter steps $L$ to satisfy
\begin{equation}
L \gtrsim C' n \|\bs{a}\|_0 \lambda^2 t^2.
\end{equation}
This guarantees that reshaping errors remain subleading and that the optimal precision for Diagonal Learning remains achievable for the Interacting Sensing problem, even with this approximate implementation.

\textit{The role of entanglement.---}In general, one can ask whether entanglement provides an advantage at different stages of the experiment. We focus on three key stages: (1) the initial state, which may be either a product state or an entangled state; (2) the control during the evolution, which can range from no control to single-qubit control or multi-qubit (entangling) control; and (3) the final measurement, which can be either local (single-qubit) or non-local (multi-qubit readout). The optimal protocol we consider begins with an entangled initial state, and applies multi-qubit control during evolution at each run. In \cref{app:alternative-protocols}, we consider two alternative protocols: one that uses a product initial state, single-qubit controls during evolution, and single-qubit measurements, and another that also uses a product initial state and single-qubit controls, but allows for entangled measurements at the end. As expected, both protocols generically perform worse than the optimal entangled protocols, indicating the crucial role of entanglement for achieving optimal scaling.

\textit{Conclusion.---}We have developed a general framework for the optimal estimation of linear functions of parameters encoded in \emph{any} diagonal many-body quantum Hamiltonian, and applied this to the problem of sensing a linear combination of local fields in the presence of \emph{arbitrary} interactions. Our results highlight several open directions. A key question is whether similar precision scaling can be achieved using only local operations—specifically, single-qubit control and measurement—without requiring global entanglement. Another direction is to extend the framework to parameters coupled to non-commuting generators. It is also important to explore efficient approximate methods for the underlying $\ell_1$-minimization problem \cite{tibshirani2012strong, rauhut2010compressive}. Addressing these questions could significantly broaden the applicability of optimal quantum sensing strategies to platforms subject to uncontrolled interactions, and inform further understanding of sensing limits for these problems.

\textit{Acknowledgments.---}We thank Youcef Bamaara, Soonwon Choi, Wenjie Gong, Raphael Kaubruegger, Tudor Manole, Daniel K. Mark, Ana Maria Rey, Yu Tong, and Peter Zoller for insightful discussions. 
This work was supported in part by ONR MURI, AFOSR MURI, NSF QLCI (award No.~OMA-2120757), DoE ASCR Quantum Testbed Pathfinder program (awards No.~DE-SC0019040 and No.~DE-SC0024220), NSF STAQ program, DARPA SAVaNT ADVENT, ARL (W911NF-24-2-0107), and NQVL:QSTD:Pilot:FTL. We also acknowledge support from the U.S.~Department of Energy, Office of Science, National Quantum Information Science Research Centers, Quantum Systems Accelerator (QSA) and from the U.S.~Department of Energy, Office of Science, Accelerated Research in Quantum Computing, Fundamental Algorithmic Research toward Quantum Utility (FAR-Qu).
 S.R.M. is supported by the
NSF QLCI (award No. OMA-2120757).
A.J.B.\@ acknowledges support from the NRC Research Associateship Program at NIST. Y.-X.W.~acknowledges support from a QuICS Hartree Postdoctoral Fellowship. 
J.B. notes that the views expressed in this article are those of the author and do not reflect the official policy or position of the U.S. Naval Academy, Department of the Navy, the Department of Defense, or the U.S. Government.

\bibliography{ref}

\newpage
\appendix
\onecolumngrid

\section{Optimal Protocol for Diagonal Learning}\label{app:opt_protocol}

In this appendix, we derive the Cram\'er-Rao bound for the Diagonal Learning problem, and demonstrate its saturability. This bound also provides a bound for the Interacting Sensing problem, after our reshaping procedure to diagonalize the corresponding interaction Hamiltonian. We reformulate the problem in a slightly more general way for arbitrary commuting and independent generators, instead of the specific choice of stabilizers used in the main text; however, we also note that such a problem can always be mapped to a qubit system and encoded in terms of these stabilizers.

\subsection{Derivation of Cram\'er Rao Bound}\label{sub_app:qcrb}
Here, we provide our proof of the Cram\'er-Rao Bound for learning a linear combination of parameters in any diagonal Hamiltonian, as a slight reformulation of the Diagonal Learning problem.

Consider the arbitrary Hamiltonian
\begin{equation}
 \hat{H}(s) = \sum_{j=1}^{m}\theta_j \hat{g}_j + \hat{H}_c(s),
\end{equation}
where $\hat{g}_j$ are \emph{traceless} generators for the Hamiltonian in an $N$-dimensional state space, and the task is to optimally learn the linear function $q= \bs{\alpha}\cdot \bs \theta$. We note that we may always redefine the generators so they are traceless, since any component of the Hamiltonian proportional to the identity is not learnable. For generality, we do not yet require mutual commutativity of the $\hat{g}_j$.

The general approach in previous works is to use the tightest \emph{single-parameter} quantum Cram\'er-Rao bound for the quantity $q=\bs\alpha\cdot\bs{\theta} = \sum_{j=1}^m \alpha_j\theta_j$ we wish to learn, under the assumption that the remaining $m-1$ unknown degrees of freedom in the problem are fixed~\cite{eldredge_optimal_2018}. As fixing these extra degrees of freedom requires extra information we do not have, this approach only provides a lower bound for the minimum attainable sensitivity. Thus, by minimizing this bound over \emph{all} possible choices for fixing these extra degrees of freedom, we can obtain the tightest possible lower bound, whose saturability is guaranteed by the existence of a protocol that achieves the corresponding measurement sensitivity. In essence, we are optimizing over all possible single-parameter bounds to place a lower bound on what is, fundamentally, a multi-parameter estimation problem.

To find a single-parameter bound on $q$, we rewrite the encoding Hamiltonian as
\begin{align}\label{eq:H_new_basis}
\hat{H}(s) &=\sum_{j=1}^m ({\bs{\alpha}}^{(j)}\cdot{\bs{\theta}})({\bs{\beta}}^{(j)}\cdot\hat{\bs{g}})+\hat{H}_c(s),
\end{align}
where $\{{\bs{\alpha}}^{(j)}\}_{j=1}^m$ are some choice of basis vectors such that ${\bs{\alpha}}^{(1)}={\bs{\alpha}}$ and $\{{\bs{\beta}}^{(j)}\}_{j=1}^m$ forms a dual basis such that ${\bs{\alpha}}^{(i)}\cdot{\bs{\beta}}^{(j)}=\delta_{ij}$. This set of vectors denotes a change of basis ${\bs{\theta}}\rightarrow\bs{q}$, where $q_j:={\bs{\alpha}}^{(j)}\cdot{\bs{\theta}}$ and $q_1=q$, our function of interest.

For simplicity of notation, let ${\bs{\beta}}:={\bs{\beta}}^{(1)}$. Then, we can define a $\bs{\beta}$-dependent generator of translations for the quantity of interest $q$ as
\begin{equation}\label{eq:generator}
\hat{g}_{q,{\bs{\beta}}}:=\frac{\partial \hat{H}}{\partial{q}}\Bigg|_{q_2,\cdots,q_N}={\bs{\beta}}\cdot\hat{\bs{g}},
\end{equation}
which follows from differentiating \cref{eq:H_new_basis}, assuming $q_j$ for $j > 1$ are fixed. Defining $\delta\hat{q} := \hat{q}_{\rm est} - q$ for unbiased estimator $\hat{q}_{\rm est}$ for $q$, we can then define the tightest (optimized over $\bs{\beta}$) single-parameter quantum Cram\'er-Rao bound for the function $q={\bs\alpha}\cdot{\bs\theta}$ as
\begin{align}\label{eq:QCRB_general}
\braket{\delta \hat{q}^2}\geq \max_{{\bs{\beta}}} \frac{1}{t^2\mathcal{F}(q|{\bs{\beta}})} \geq \max_{{\bs{\beta}}}\frac{1}{t^2\norm{\hat{g}_{q,{\bs{\beta}}}}_s^2},
\end{align}
where $\mathcal{F}(q|{\bs{\beta}})$ is the quantum Fisher information with respect to the parameter $q$, given a choice of ${\bs{\beta}}$ that fixes the subspace spanned by the coefficient vectors ${\bs{\beta}}^{(j)}$ with $j>1$ (specifying the extra $m-1$ degrees of freedom). This in turn is bounded by the squared seminorm of the generator $\hat{g}_{q,{\bs{\beta}}}$, which we take to mean the difference between the maximum and minimum eigenvalues of $\hat{g}_{q,{\bs{\beta}}}$ in \cref{eq:generator}~\cite{boixo2007generalized}. The optimization to find the tightest bound is done with respect to ${\bs{\beta}}$, subject to ${\bs{\beta}}\cdot{\bs{\alpha}}=1$.

Thus, we can reduce the problem of finding the bound to the following optimization problem:
\begin{align}\label{eq:minprob_app1}
 \text{minimize (w.r.t. ${\bs\beta}$): }&\norm{{\bs\beta} \cdot \hat{\bs g}}_{s}, \nonumber\\
 \text{subject to: }&\, {\bs\alpha}\cdot{\bs\beta}=1.
\end{align}

Now assume that the generators commute and are linearly independent, in addition to being traceless. Then there exists a basis $\{\ket{k}\}_{k=1}^N$ in which all generators are diagonal:
\begin{equation}
 \sum_{j=1}^{m}\theta_j \hat{g}_j = \sum_{k=1}^N \lambda_k \dyad{k}, \quad \bs \lambda = h^T \bs \theta,
\end{equation}
where $h \in \mathbb{R}^{m\times N}$ has rows given by the eigenvalues of the generators in this basis, i.e.~$h_{j,k} := \braket{k|\hat{g}_j|k}$. Since the generators are independent, the rows of $h$ are linearly independent, and $m \leq N$. Also note that via traceless-ness of the generators, each row of $h$ sums to $0$.
In this basis, the problem in Eq.~\eqref{eq:minprob_app1} becomes the linear program
\begin{align}
 \text{minimize (w.r.t. $\bs{\beta}\in \mathbb{R}^m$, $u,v\in \mathbb{R}$): }&\, v-u, \nonumber\\
 \text{subject to: }&\, u \leq (h^T\, \bs{\beta})_k \leq v \quad \forall k \in [N] , \nonumber\\
 & \, \bs{\alpha}^T \bs{\beta} = 1, 
\end{align}
The dual of this linear program is \cite{boyd2004convex}
\begin{align}\label{eq:dual_minprob_app1}
 \text{maximize (w.r.t. $\xi \in \mathbb{R},\bs{w},\bs{z} \in \mathbb{R}^N$): } &\, \xi, \nonumber\\
 \text{subject to: } & \,h(\bs{w}-\bs{z}) = \xi\bs\alpha, \nonumber \\
 & \sum_k w_k = \sum_k z_k = 1, \nonumber\\
 & w_k,z_k\geq 0.
\end{align}
By strong duality for linear programs, the answer to the problems in Eqs.~(\ref{eq:minprob_app1}) and (\ref{eq:dual_minprob_app1}) are equal.

Now, defining $\bs{a} := (\bs{w}-\bs{z})/\xi$, we have via the triangle inequality $\xi\norm{\bs{a}}_1 = \norm{\bs{w} - \bs{z}}_1 
\leq 2$. Thus, we can define the problem
\begin{align}\label{eq:1norm_minprob}
 \text{maximize (w.r.t. ${\bs{a} \in \mathbb{R}^N}$): } &\, \frac{2}{\|\bs{a}\|_1}, \nonumber\\
 \text{subject to: } & h\bs{a} = \bs\alpha,\nonumber\\
 & \sum_k a_k = 0,
\end{align}
whose solution will strictly upper bound that in Eq.~\eqref{eq:dual_minprob_app1}. Hence, the Cram{\'e}r-Rao bound for estimating $q$ may be written as
\begin{equation}\label{eq:Bnd_app}
\braket{\delta\hat{q}^2}\geq \max_{{\bs{\beta}}}\frac{1}{t^2\norm{\hat{g}_{q,{\bs{\beta}}}}_s^2}= \min_{\bs{a}\in\mathcal{A}}\frac{\norm{\bs{a}}_1^2}{4t^2},
\end{equation}
where $\mathcal{A}:=\{\bs a | h \bs a = \bs\alpha, \sum_k a_k = 0\}$.

We note that we can reformulate Eq.~\eqref{eq:1norm_minprob} as follows. Let us define the augmented constraint matrix
\begin{align}
    h' := \begin{bmatrix}
 \bs{1}^T \\ 
 h\\
 \tilde{h}
 \end{bmatrix},\qquad \bs{\alpha}' := \begin{bmatrix}
 0 \\ 
 \bs{\alpha}\\
 \tilde{\bs{\alpha}}
\end{bmatrix}
\end{align}
where $\tilde{h}\in \mathbb{R}^{(N-m-1)\times N}$ is any matrix whose rows are independent and span the orthogonal complement of the rowspace of $\begin{bmatrix}
 \bs{1}^T \\ 
 h
\end{bmatrix}$, and $\tilde{\bs{\alpha}}\in \mathbb{R}^{N-m-1}$ parametrizes these unconstrained dimensions. $h'$ is thus a full rank matrix in $\mathbb{R}^{N\times N}$, so every feasible solution $\bs{a}$ to Eq.~\eqref{eq:1norm_minprob} can be uniquely written as $\bs{a} = (h')^{-1}\bs{\alpha}'$. We then have the equivalent linear program
\begin{align}\label{eq:1norm_minprob_aug}
 \text{maximize (w.r.t. ${\tilde{\bs{\alpha}} \in \mathbb{R}^{N-m-1}}$): } &\, \frac{2}{\norm{(h')^{-1}\bs{\alpha}'}_1},
\end{align}
as used in the main text for our specific choice of stabilizers.

As interpreted there, we can form a complete set of mutually commuting, independent generators $\hat{g}'_j = \sum_{k=1}^N h'_{j,k}\ket{k}\bra{k}$, such that $\hat{g}'_0 = \hat{I}$ is the $N$-dimensional identity operator, and $\hat{g}'_j = \hat{g}_j$ for $1\leq j \leq m$. Defining $\theta' = (\theta_0,\bs{\theta}^T,\bs{0}^T) \in \mathbb{R}^N$, our problem is now equivalent to learning the function $q = \bs{\alpha}'\cdot \bs{\theta}'$, with generating Hamiltonian
\begin{align}
    \hat{H}'(s) = \sum_{j=0}^{N-1} \theta'_j \hat{g}'_j + \hat{H}_c(s),
\end{align}
\emph{with the additional stipulation} that we know a priori that $\theta'_j=0$ for $j > m$. Thus, the corresponding entries of $\bs{\alpha}'$ are unconstrained. We also note that since the parameter coupled to the identity cannot be estimated, its coefficient $\alpha'_0$ must vanish.

While fixing the unconstrained elements $\tilde{\bs{\alpha}}$ thus has no physical consequence for our ability to estimate $q$, it \emph{does} affect our presented protocol insofar as it depends on the values of $\bs{a} = (h')^{-1} \bs{\alpha}'$ (see App.~\ref{sub_app:protocol} below).

\subsection{Extension of Optimal Protocol to Arbitrary Diagonal Hamiltonians}\label{sub_app:protocol}

In the main text, we presented an optimal protocol in the case that our generators correspond to some subset of the non-trivial stabilizers in $\mathcal{S}$. We briefly note how to generalize our approach to the arbitrary form of the diagonal Hamiltonian in the App.~\ref{sub_app:qcrb}, assuming traceless, linearly independent, commuting generators. Given an optimal solution $\bs{a}\in \mathcal{A}$ that solves Eq.~\eqref{eq:1norm_minprob}, we form sets $\Lambda^+ := \{k|a_k > 0\}$ and $\Lambda^- := \{k|a_k < 0\}$ of our diagonal basis state labels, and form enumerations of these sets $k^\pm(\mu)\in\Lambda^\pm$ for $1 \leq \mu \leq N^\pm$ for $N^\pm := |\Lambda^\pm|$. We also define the superposition states $|\psi_{k,k'}(\theta)\rangle = (|k\rangle + e^{i\theta}|k'\rangle )\sqrt{2}$, and consider coherent basis switching operators $\hat{S}_{k,k''}$ ($\hat{S}_{k',k''})$ that take $\ket{k}$ ($\ket{k'}$) to $\ket{k''}$ in $|\psi_{k,k'}$, without affecting the the basis state in the other half of the superposition, e.g.~$\hat{S}_{k,k''} = \sum_{k''\neq k,k'} |k''\rangle \langle k''| + |k\rangle \langle k'| + |k'\rangle \langle k|$.

Now, starting from $\ket{\psi_{k^+(1),k^-(1)}(0)}$, the optimal protocol largely follows the stabilizer version in the main text, with the appropriate evolution durations between coherent basis switching operations given by $\Delta t^\pm_{\mu} = 2t(a_{k_\mu}/\norm{\bs{a}}_1)$ for the left/right halves of the superposition. Owing to the constraint $\sum_k a_k=0$ in Eq.~\eqref{eq:1norm_minprob} for any choice of $\bs{a}$, we are again guaranteed that $\sum_{\mu} a_{k^\pm(\mu)} = \pm\norm{\bs{a}}_1/2$, and so both halves of the superposition evolve for equal total durations: $\sum_{\mu} \Delta t_{\mu}^+ = \sum_{\mu} \Delta t_{\mu}^- = t$.

The final evolved state is $\ket{\psi_{k^+(N^+),k^-(N^-)}(\varphi)}$ for accumulated phase difference $\varphi = 2tq/\norm{\bs{a}}_1$. For readout, we can neglect qubits whose state is identical between the left and right halves of the superposition, since these are disentangled from the rest of the system. The relative phase for the remaining qubits can be read out via a standard parity measurement; alternatively, the state of these qubits can be disentangled by a series of applied CNOT and CNOT$_0$ gates to transfer the relative phase to a single qubit, which can then be measured. In either scenario, this amounts to effectively measuring the observable $\hat{O} = -i(|k^+(N^+)\rangle \langle k^-(N^-)| - |k^-(N^-)\rangle \langle k^+(N^+)|)$ over the whole system. We then have
\begin{align}
 \langle \hat{O}\rangle = \sin\left(\frac{2tq}{\norm{\bs{a}}_1}\right),\qquad \langle (\Delta\hat{O})^2\rangle = \cos\left(\frac{2tq}{\norm{\bs{a}}_1}\right)^2,
\end{align}
for $\Delta \hat{O} = \hat{O} - \braket{\hat{O}}$, where expectations $\braket{\cdot}$ are taken with respect to $\ket{\psi_{k^+(N^+),k^-(N^-)}(\varphi)}$. 
Thus, for phase sensitivity $\braket{\delta \hat{q}^2} = \langle (\Delta\hat{O})^2\rangle/|\partial \langle\hat{O}\rangle/\partial q|^2$, we obtain $\braket{\delta \hat{q}^2} = \norm{\bs{a}}_1^2/4t^2$---so long as $\braket{\hat{O}}$ continues to enable an unbiased estimate of $q$ without phase slips. The existence of such a protocol guarantees that the bound in Eq.~\eqref{eq:Bnd_app} is indeed tight, and that the solutions to Eq.~\eqref{eq:dual_minprob_app1} and Eq.~\eqref{eq:1norm_minprob} are, in fact, equal.

\section{Sparsity of the Optimal Solution}\label{app:sparsity}
We address the structure of the optimal vector $\bs a$. Here, we show that the optimal $\bs a$ is \emph{sparse} when the number of generators is small:
\begin{theorem}[Sparsity]
Let $m=\mathrm{poly}(n)$ be the number of generators on $n$ qubits. Then there exists an optimal solution $\bs a$ to the minimization problem~\eqref{eq:Bnd_app} with support size $\|\bs a\|_0 = \mathrm{poly}(n)$.
\end{theorem}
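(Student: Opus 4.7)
The plan is to recast the $\ell_1$-minimization in Eq.~\eqref{eq:1norm_minprob} as a standard-form linear program and invoke the classical fact that such LPs attain an optimum at a basic feasible solution (BFS) whose sparsity is controlled by the number of equality constraints.

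First I would split each component as $a_k = a_k^+ - a_k^-$ with $a_k^\pm \geq 0$, so that $\|\bs{a}\|_1 \leq \bs{1}^T(\bs{a}^+ + \bs{a}^-)$, with equality precisely when $a_k^+ a_k^- = 0$ for every $k$. This converts Eq.~\eqref{eq:1norm_minprob} into a standard-form LP in $2N$ nonnegative variables $(\bs{a}^+, \bs{a}^-)$ with $m+1$ linear equality constraints, namely $h(\bs{a}^+ - \bs{a}^-) = \bs{\alpha}$ and $\bs{1}^T(\bs{a}^+ - \bs{a}^-) = 0$, and linear objective $\bs{1}^T(\bs{a}^+ + \bs{a}^-)$.

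Next I would verify that the LP attains its optimum at a BFS. Feasibility follows because $h$ has full row rank by linear independence of the generators, so some $\bs{a}_0$ satisfies $h\bs{a}_0 = \bs{\alpha}$; using tracelessness in the form $h\bs{1} = \bs{0}$, I can shift by an appropriate multiple of $\bs{1} \in \ker h$ to simultaneously enforce $\bs{1}^T\bs{a} = 0$. The objective is bounded below by $0$, so the fundamental theorem of linear programming guarantees an optimum at a BFS. A BFS of a standard-form LP in $2N$ nonnegative variables with $m+1$ equality constraints has at most $m+1$ strictly positive coordinates. I would then argue that at any optimum one can further ensure $a_k^+ \cdot a_k^- = 0$ for each $k$: if not, subtracting $\min(a_k^+, a_k^-)$ from both entries preserves $\bs{a} = \bs{a}^+ - \bs{a}^-$ and every equality constraint while strictly decreasing the objective, contradicting optimality. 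Combining the BFS bound with this disjoint-support reduction yields an optimal $\bs{a}$ with $\|\bs{a}\|_0 \leq m+1$, which is $\mathrm{poly}(n)$ whenever $m = \mathrm{poly}(n)$.

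I do not anticipate any deep obstacle, as the argument is a direct application of elementary LP theory. The only care needed lies in establishing feasibility, which is handled via the tracelessness of the $\hat{g}_j$, and in the clean execution of the disjoint-support reduction, both of which are routine. An alternative route would be to invoke standard sparsity results for basis pursuit~\cite{foucart2013invitation}, but the self-contained BFS argument above delivers the tight constant $m+1$ directly.
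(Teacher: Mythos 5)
Your proposal is correct and follows essentially the same route as the paper's proof: split $\bs a = \bs a^+ - \bs a^-$, pass to a standard-form LP with the $m+1$ equality constraints $h\bs a = \bs\alpha$ and $\bs 1^T\bs a = 0$, invoke the existence of an optimal basic feasible solution with at most $\mathrm{rank}$-many positive entries, and use the disjoint-support reduction to conclude $\|\bs a\|_0 \leq m+1 = \mathrm{poly}(n)$. Your explicit feasibility check via $h\bs 1 = \bs 0$ is a small addition the paper leaves implicit, but the argument is otherwise the same.
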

\begin{proof}[Proof sketch]
This result follows from classical linear-programming theory on basic feasible solutions \cite{bertsimas1997introduction, chvatal1983linear, boyd2004convex}. 
For completeness, we provide the steps of the argument. We rewrite the problem in Eq.~(\ref{eq:1norm_minprob}) in standard linear-programming form. 
Write $\bs a = \bs a^+ - \bs a^-$ with $\bs a^+,\bs a^-\ge0$. Then
\begin{equation}
\|\bs a\|_1 = \mathbf{1}^\top(\bs a^+ + \bs a^-), \qquad 
\begin{bmatrix}
 \bs{1}^T \\ 
 h
\end{bmatrix}(\bs a^+ - \bs a^-) = \begin{bmatrix}
 0 \\ 
 \bs \alpha
\end{bmatrix}.
\end{equation}
Introducing $\bs x=\begin{bmatrix}
 \bs a^+ \\ 
 \bs a^-
\end{bmatrix}\in\mathbb R^{2N}$, $\bs c=\begin{bmatrix}
 \bs{1} \\ 
 \bs 1
\end{bmatrix}$, and 
$A=\begin{bmatrix}
 \bs{1}^T & -\bs{1}^T\\ 
 h & -h
\end{bmatrix}\in\mathbb R^{m+1\times 2N}$, the problem is equivalent to
\begin{equation}\label{eq:1norm_min_lp}
 \min_{\bs x\ge0}\; \bs c^\top \bs x 
 \quad \text{subject to}\quad A \bs x = \begin{bmatrix}
 0 \\ 
 \bs \alpha
\end{bmatrix}.
\end{equation}
\emph{Step 1.} 
The feasible set $P=\{\bs x\ge0 : A \bs x=\begin{bmatrix}
 0 \\ 
 \bs \alpha
\end{bmatrix}\}$ is a polyhedron. Since the objective is linear, there exists an optimal solution at an extreme point of $P$.\\
\emph{Step 2.} 
\emph{Basic feasible solutions} correspond to a subset of the extreme points of $P$. 
Let $B=\{j: x_j>0\}$ denote the support of a basic feasible solution $\bs x$. Then the columns $A_B$ must be linearly independent: otherwise, there exists a nonzero $\bs d_B$ with $A_B \bs d_B=0$, and small perturbations $\bs x_B\pm \varepsilon \bs d_B$ remain feasible and nonnegative, contradicting extremality. Thus
\begin{equation}
 |B| \le \mathrm{rank}(A).
\end{equation}
Because $\mathrm{rank}(A)=\mathrm{rank}(\begin{bmatrix}
 \bs{1}^T \\ 
 h
\end{bmatrix})=:r$, every basic feasible solution has at most $r$ positive entries.\\
\emph{Step 3.} 
Returning to $\bs a=\bs a^+-\bs a^-$, note that, for each index $j$, at most one of $a^+_j,a^-_j$ can be strictly positive in an optimal solution. Indeed, if both are positive, we may simultaneously reduce them by their minimum, leaving $\bs a$ unchanged while strictly lowering the objective. Hence in any optimal solution, the number of nonzero entries of $\bs a$ is bounded by the number of positive entries of $\bs x$. Combining with Step~2,
\begin{equation}
 \|\bs a\|_0 \le |B| \le r.
\end{equation}
\emph{Step 4.} 
Since $r= m+1$, we conclude that there exists an optimal solution $\bs a$ with
\begin{equation}
 \|\bs a\|_0 \le m+1.
\end{equation}
In particular, for $m=\mathrm{poly}(n)$ the optimal $\bs a$ has support size polynomial in $n$, as claimed.
\end{proof}
This sparsity result implies that, even for an exponentially large Hilbert space (of dimension $N$), the optimal solutions to the Diagonal Learning bound are supported on only $m$ entries. For our optimal protocol, if $m=\mathrm{poly}(n)$, this implies that we only require $\mathrm{poly}(n)$ coherent basis switching operations to optimally encode our phase.

\section{Hamiltonian Reshaping Error}\label{app:reshaping_error}
In this appendix, we analyze the errors associated with the proposed random Trotterization protocol for our Hamiltonian reshaping.
While we focus on the case of qubit Hamiltonians, all results in this section directly extend to the case with multi-qudit Hamiltonians (and more generally, Hamiltonians on any finite-dimensional systems) using concentration bounds on random product formulas in Hamiltonian simulation~\cite{chen_concentration_2021}. For the qudit case, all error bounds will acquire an extra prefactor that depends on dimensions of individual sensors. We also comment that, as in the stabilizer case, alternative methods for achieving the desired cancellation of off-diagonal terms exist; for instance, one may utilize a generalization of the proposed incommensurate strong field protocol, where strong fields are locally applied in the basis in which our generators our diagonalized.

In App.~\ref{app:c_1}, we review expected error bounds for effective unitary evolution via randomized Trotterization, adding a calculation of the expected variance for this error. In App.~\ref{app:reshaphing_error_opt_protocol}, we then derive specific instances of these error bounds as applied to our proposed optimal protocol. Finally, in App.~\ref{app:c_3}, we use these results to compute an upper bound on the added noise in our estimation of $q$ arising from this approximate evolution. This error bound leads to a requirement on the needed step size for our protocol to remain asymptotically optimal.

\subsection{Reshaping Error Bounds for the Effective Hamiltonian}
\label{app:c_1}
Here, we analyze the error associated with simulating the effective Hamiltonian $\hat{H}_\text{eff}$ using randomized Hamiltonian simulation techniques. Our goal is to simulate time evolution under a reshaped Hamiltonian composed entirely of commuting terms, defined as
\begin{equation}
 \hat{H}_\text{eff} = \frac{1}{N} \sum_{k=1}^N \hat{s}_k \hat{H}_0 \hat{s}_k,
\end{equation}
where $\hat{H}_0$ is an arbitrary $n$-qubit Hamiltonian, and each $\hat{s}_k$ is an element of the stabilizer group $\mathcal{S} = \{ \hat{I}, \hat{Z} \}^{\otimes n}$. This form arises from the stabilizer-averaging technique described in the main text, which projects $\hat{H}_0$ onto a diagonal subspace by randomizing over $\mathcal{S}$. The resulting $\hat{H}_\text{eff}$ contains only commuting Pauli-$Z$ terms and is suitable for efficient learning and sensing.

To simulate the time evolution under $\hat{H}_\text{eff}$ up to a total time $t$, we apply a QDRIFT-like  method~\cite{campbell_random_2019}, which samples random Hamiltonian terms to construct an approximate evolution. We define
\begin{equation}\label{eq:single_timestep_unitary}
 \hat{V}_k = e^{-i \hat{s}_k \hat{H}_0 \hat{s}_k \, \Delta t}, \quad \Delta t = \frac{t}{L},
\end{equation}
where each $\hat{s}_k$ is drawn uniformly at random from $\mathcal{S}$, and $L$ is the total number of steps. The simulated evolution operator is then the product
\begin{equation}
 \hat{\tilde{U}} = \hat{V}_L \cdots \hat{V}_1,
\end{equation}
which serves as an approximation to the ideal evolution operator
\[
\hat{U} = e^{-i \hat{H}_\text{eff} t}.
\]

To quantify the simulation error, we consider the operator-norm distance between the true evolution and its randomized approximation. We can split this distance into a deterministic error and a random error as
\begin{equation}\label{eq:true_vs_random_unitary_error}
 \norm{\hat{U}-\hat{\tilde{U}}} \leq \norm{\hat{U}-\mathbb{E}[\hat{V}]^L} + X, \quad X = \norm{ \hat{V}_L \cdots \hat{V}_1 - \mathbb{E}[\hat{V}]^L },
\end{equation}
where $\mathbb{E}[\hat{V}]$ denotes the average of $\hat{V}_k$ in Eq.~(\ref{eq:single_timestep_unitary}) over all $
\hat{s}_k$ in the stabilizer group. The error term $\|\hat{U}-\mathbb{E}[\hat{V}]^L\|$ is deterministic and does not depend on the random choice of the unitaries. Using the result from Ref.~\cite{campbell_random_2019}, we have 
\begin{equation}\label{eq:bias_error}
 \norm{\hat{U}-\mathbb{E}[\hat{V}]^L} \leq \frac{ 2\lambda^2t^2}{L},
\end{equation}
where $\lambda = \| \hat{H}_0 \|$ is the operator norm of the original Hamiltonian. In our setting, where $\hat{H}_0$ contains at most poly($n$) Pauli terms, we can bound $\lambda = O(\mathrm{poly}(n))$. The other term $X$ in Eq.~(\ref{eq:true_vs_random_unitary_error}) is a random variable whose value depends on the random sample trajectory $\{ \hat{s}_k \}_{k=1}^L$. The expectation $\mathbb{E}[\hat{V}]^L$ is taken over all such trajectories, treating each $\hat{V}_k$ as an independent random unitary.

Using concentration results from randomized product formulas, specifically Proposition 3.3 from Ref.~\cite{chen_concentration_2021}, we obtain a tail bound on the error for $X$. For any $\epsilon \in [0, \infty)$, the probability that $X$ exceeds $\epsilon$ is bounded as
\begin{equation}
 \mathrm{Pr}(X \geq \epsilon) \leq f(\epsilon), \quad f(\epsilon) = 2N \exp\left(-\frac{L \epsilon^2}{8t^2\lambda^2 + \frac{4t\lambda \epsilon}{3}} \right).
\end{equation}
Since $X$ is bounded from above by $1$, we can tighten the tail bound as
\begin{equation}
 \mathrm{Pr}(X \geq \epsilon) \leq 
 \begin{cases}
 f(\epsilon), & \epsilon \leq 1, \\
 0, & \text{otherwise}.
 \end{cases}        
\end{equation}
Next, we compute the expectation and variance of the random variable $X$:
\begin{gather}\label{eq:X-expectation}
  \mathbb{E}[X] = \int_0^\infty \mathrm{Pr}(X\geq \epsilon) \,d\epsilon \leq \int_0^1 \min\{1, f(\epsilon)\}\, d\epsilon.
\\\label{eq:X_variance}
  \mathrm{Var}[X] \leq \mathbb{E}[X^2] = \int_0^\infty \mathrm{Pr}(X^2\geq \tau )  \,d\tau \leq \int_0^1 2\epsilon \, \min\{1, f(\epsilon)\} \, d\epsilon.
\end{gather}
Following Ref.~\cite{chen_concentration_2021}, note that the term $\min\{1, f(\epsilon)\} $ remains equal to 1 for $\epsilon \lesssim \max (n \lambda t/L, \sqrt{n \lambda^2 t^2/L} )$. For larger $\epsilon$, the integrand decays exponentially until it vanishes for $\epsilon > 1$, and the corresponding integral over this range yields a contribution of order $O(\max (\lambda t/L, \sqrt{ \lambda^2 t^2/L} ))$ in Eq.~(\ref{eq:X-expectation}) and a contribution of order $O(\max (\lambda^2 t^2/L^2, {\lambda^2 t^2/L} ))$ in Eq.~(\ref{eq:X_variance}). Thus, we obtain the following bounds:
\begin{equation}\label{eq:expectaion_variance_X}
 \mathbb{E}[X]
 \lesssim c\sqrt{\frac{n\lambda^2 t^2}{L}},\quad \mathrm{Var}[X] 
 \lesssim c' \frac{n \lambda^2 t^2}{L},
\end{equation}
for constants $c, c'$, where $\lesssim$ denotes the bound asymptotically in the limit of large $L$.

Note that the bound for the expectation value of $X$ was already introduced in Ref.~\cite{chen_concentration_2021}. Here, we additionally derived the expression for its variance. These bounds show that both the average error and the standard deviation of $X$ decay as $1/\sqrt{L}$ asymptotically in $L$. As we discuss in App.~\ref{app:reshaphing_error_opt_protocol} below, this determines the precision of parameter estimation in our protocol. In particular, the concentration of $X$ around zero ensures that the simulated dynamics remain close to those generated by the ideal commuting Hamiltonian, with high probability.

\subsection{Error Bounds for the Full Controlled Protocol with Trotterized Simulation}\label{app:reshaphing_error_opt_protocol}

We now analyze the error introduced by the reshaping of the Hamiltonian in the optimal protocol for estimating $q$. In the ideal case, the full unitary evolution for the optimal protocol (involving $R = \|\bs{a}\|_0$ rounds of coherent controls) is given by
\begin{equation}
 \hat{U}_1 = \prod_{r=1}^R \hat{U}_{\text{eff}}(t_r - t_{r-1}) \hat{W}_r,
\end{equation}
where
\begin{equation}
 \hat{U}_{\text{eff}}(t) = e^{-i\hat{H}_{\text{eff}} t},
\end{equation}
and $\hat{W}_r$ are the control unitaries (e.g., coherent switching gates between different GHZ-like superpositions) applied at times $t_0 = 0 < t_1 < \cdots < t_R = t$.

We now define an approximate protocol where the evolution under $\hat{H}_{\text{eff}}$ is replaced by a randomized Trotter expansion using $L \gg R$ steps. Specifically, for each interval $[t_{r-1}, t_r]$, we apply a sequence of sampled unitaries $\{ \hat{V}_i \}$ defined as in Eq.~(\ref{eq:single_timestep_unitary}). The sensor evolution is then
\begin{equation}
 \hat{U}_2 = \prod_{r=1}^R \left( \prod_{i = (L t_{r-1}/t) + 1}^{L t_r/t} \hat{V}_i \right) \hat{W}_r,
\end{equation}
which interleaves control gates $\hat{W}_r$ with randomized simulation of each evolution segment. Here, we choose $t_r$ to be integer multiples of $t/L$ and assume that $\min_r (t_r-t_{r-1}) \gg t/L$.

We consider the operator norm distance between the ideal and the implemented unitaries:
\begin{align}\label{eq:protocol_reshaping_error}
 Y = \norm{ \hat{U}_1 - \hat{U}_2 } &\leq \sum_{r=1}^R \norm{ \hat{U}_{\text{eff}}(\delta t_r) - \prod_{i = L t_{r-1}/t + 1}^{L t_r/t} \hat{V}_i } \nonumber \\
 &\leq \sum_{r=1}^R \norm{ \hat{U}_{\text{eff}}(\delta t_r) - \mathbb{E}[\hat{V}(\delta t_r)] } + \sum_{r=1}^R \norm{ \mathbb{E}[\hat{V}(\delta t_r)] - \prod_{i = L t_{r-1}/t + 1}^{L t_r/t} \hat{V}_i },
\end{align}
where $\delta t_r = t_r - t_{r-1}$, and the expectation $\mathbb{E}[\hat{V}(\delta t_r)]$ denotes the averaged evolution over the interval. Next, using Eq.~(\ref{eq:bias_error}), the deterministic error between the average and ideal evolution satisfies
\begin{equation}
 \left\| \hat{U}_{\text{eff}}(\delta t_r) - \mathbb{E}[\hat{V}(\delta t_r)] \right\| \leq \frac{2 \lambda^2 \delta t_r^2}{L \delta t_r / t} = \frac{2 \lambda^2 t \delta t_r}{L}.
\end{equation}
Summing over all $r$, we obtain
\begin{equation}
 \sum_{r=1}^R \left\| \hat{U}_{\text{eff}}(\delta t_r) - \mathbb{E}[\hat{V}(\delta t_r)] \right\| \leq \frac{2 \lambda^2 t}{L} \sum_{r=1}^R \delta t_r = \frac{2 \lambda^2 t^2}{L}.
\end{equation}
The remaining term in Eq.~(\ref{eq:protocol_reshaping_error}) is random and dependent on the choice of sampled unitaries. Since each $\hat{V}_i$ is independently sampled, by using Eq.~(\ref{eq:expectaion_variance_X}), we obtain the following bound on the expectation of $Y$:
\begin{align}\label{eq:expectation_Y}
 \mathbb{E}[Y] &\lesssim \frac{2 \lambda^2 t^2}{L} + c \sum_{r=1}^R \sqrt{ \frac{n \lambda^2 \delta t_r^2}{L \delta t_r / t} } 
 = \frac{2 \lambda^2 t^2}{L} + c \sqrt{ \frac{n \lambda^2 t}{L} } \sum_{r=1}^R \sqrt{\delta t_r} \lesssim 2c\sqrt{ \frac{nR \lambda^2 t^2}{L} },
\end{align}
using the fact that $\sum_{r=1}^R\sqrt{\delta t_r} \leq \sqrt{Rt}$ via an application of the Cauchy-Schwarz inequality.

\subsection{Reshaping Error Propagation to Parameter Estimation}
\label{app:c_3}
We now examine how the approximation error in the unitary evolution affects the precision of estimate for $q$ in our optimal protocol. Recall that, in the ideal setting, the protocol begins with a GHZ-like superposition state and applies a sequence of coherent switching operations interleaved with evolution under the effective Hamiltonian. The final state takes the form
\begin{equation}
 \ket{\psi_1} = \hat{U}_1 \ket{\psi(0)} = \frac{1}{\sqrt{2}} \left( \ket{x} + e^{i\varphi} \ket{y} \right),
\end{equation}
where $\ket{x}$ and $\ket{y}$ are two computational basis states, and the accumulated phase is given by
\begin{equation}
 \varphi = \frac{2q t}{\|\bs{a}\|_1},
\end{equation}
with $\bs{a}$ denoting the optimized Hadamard transformed vector in $\mathbb{R}^N$ (see Eq.~(\ref{eq:Bnd_app})).

To estimate $\varphi$, and thereby estimate $q$, we effectively measure the observable
\begin{equation}
 \hat{O} = -i \left(\dyad{x}{y} -\dyad{y}{x}\right),
\end{equation}
which generates interference between $\ket{x}$ and $\ket{y}$. In the limit where $\varphi$ is small (i.e., when $qt \ll 1$), we have
\begin{equation}
 \bra{\psi_1} \hat{O} \ket{\psi_1} \simeq \varphi,
\end{equation}
allowing direct extraction of $\varphi$ from measurement outcomes.

Now suppose the exact unitary $\hat{U}_1$ is replaced by the approximate unitary $\hat{U}_2$ defined by the Trotterized simulation protocol discussed in App.~(\ref{app:reshaphing_error_opt_protocol}). The final state becomes $\ket{\psi_2} = \hat{U}_2 \ket{\psi(0)}$, and the estimated phase is
\begin{align}
\varphi_{\text{approx}} &= \bra{\psi_2} \hat{O} \ket{\psi_2}= \bra{\psi_1} \hat{O} \ket{\psi_1} + \delta \varphi,
\end{align}
where $\delta \varphi$ captures the error due to the deviation $\hat{U}_2 - \hat{U}_1$ in the unitary evolution. Using H\"older's inequality $\|\hat{O}\delta\hat{\rho}\|_1 \leq \|\hat{O}\|\|\delta\hat{\rho}\|_1 $ for trace norm $\| \cdot \|_1 := \Tr[|\cdot|]$, where $\delta\hat{\rho} := \ket{\psi_2}\bra{\psi_2} - \ket{\psi_1}\bra{\psi_1}$, we can obtain the bound
\begin{equation}
\label{eq:delta.varphi.bound}
 |\delta \varphi| \leq 2\|\hat{O}\|\|\hat{U}_1-\hat{U}_2\|
 .
\end{equation}

To quantify the impact of this error on estimation, we compute the average mean-square error between our estimate and the actual value of the phase, averaged over both measurement outcomes and random sampling for our Trotterization. Specifically, for deviation $\delta \hat{O} = \hat{O} - \varphi$, we are interested in the quantity
\begin{align}
\mathbb{E}[\braket{\psi_2|(\delta\hat{O})^2|\psi_2}]= \braket{\psi_1|(\Delta\hat{O})^2|\psi_1} + \mathbb{E}\left[\Tr[\hat{O}^2\delta\hat{\rho}]\right] - 2\varphi\mathbb{E}[\delta\varphi],\label{eq:amse_supp}
\end{align}
where $\Delta\hat{O} = \hat{O} - \braket{\psi_1|\hat{O}|\psi_1}$, and we assume $\braket{\psi_1|\hat{O}|\psi_1} = \varphi$.

First, we note that the first term in Eq.~\eqref{eq:amse_supp} is simply the variance of $\hat{O}$ in the ideal protocol. Now, for the second term on the right in Eq.~\eqref{eq:amse_supp}, using using Eq.~(\ref{eq:expectation_Y}) we obtain
\begin{align}
 & \mathbb{E}\left[\Tr[\hat{O}^2\delta\hat{\rho}]\right] \leq 2 \|\hat{O}^2\|\,\mathbb{E}\left[ \| \hat{U}_1 - \hat{U}_2 \| \right] \lesssim 4c\sqrt{\frac{nR\lambda^2t^2}{L}}.
\end{align}
Lastly, the final term in Eq.~\eqref{eq:amse_supp} may be similarly bounded by noting
\begin{align}
    |\mathbb{E}[\delta \varphi]|\leq 
 2\| \hat{O} \| \, \mathbb{E}\left[ \| \hat{U}_1 - \hat{U}_2 \| \right] \lesssim 4c\sqrt{\frac{nR\lambda^2t^2}{L}},
\end{align}
and that $|\varphi| \leq \|\hat{O}\|$.

Now, relating $\varphi$ to $q$, we have the average mean-square error
\begin{align}
    \mathbb{E}[\braket{\psi_2|\delta \hat{q}^2|\psi_2}] \lesssim \frac{\norm{\bs{a}}_1^2}{4t^2}\left\{1\, + \, 12c \sqrt{\frac{nR\lambda^2t^2}{L}}\right\},
\end{align}
where $\delta \hat{q} = \delta\hat{O}\norm{\bs{a}}_1/(2t)$.
Thus, to maintain the optimal scaling of the bias and variance in estimating $q$---i.e., matching the variance of our estimate in the ideal case---we must ensure that the number of Trotter steps $L$ satisfies
\begin{equation}
 L \gtrsim c''n\lambda^2 t^2\norm{\bs{a}}_0,
\end{equation}
for constant $c''$, where we note that $\|\bs{a}\|_0 \leq m+1$. This condition guarantees that the simulation error remains subleading to the optimal sensitivity bound and does not degrade the estimation precision achieved by the optimal protocol.

\section{Universality of Diagonal Learning}\label{app:universality}
In this appendix, we show two specific instances where our bound reproduces previously known results: 1) non-interacting qubit sensor networks, and 2) networks of Mach-Zehnder interferometers. 

\subsection{Independent Stabilizers \label{app:indepstab}}
In Ref.~\cite{eldredge_optimal_2018}, the optimal bound for estimating the linear function $q=\bs{\alpha} \cdot \bs{\theta}$ of the Hamiltonian $\hat{H} = \sum_j \theta_j \hat{Z}_j$ is given by
\begin{equation}\label{eq:z_Bnd}
\braket{\delta\hat{q}^2} \geq \frac{\|\bs{\alpha}\|_\infty^2}{4t^2}.
\end{equation}
We now show that our bound in Eq.~(\ref{eq:Bnd_app}) reproduces this result for such Hamiltonians and further extends it to Hamiltonians with arbitrary independent generators drawn from an arbitrary $n$-qubit stabilizer group. By \emph{independent}, we mean a set of commuting Pauli operators $\{\hat{g}_j\}_{j=1}^n$ such that no nontrivial product of these operators is in the set. For an $n$-qubit system, there can be at most $n$ such independent generators, e.g. $\{\hat{Z_1},\cdots,\hat{Z_n}\}$ for the stabilizer group of pauli-Z strings. 

To get the bound in \cref{eq:z_Bnd}, consider the dual formulation of Eq.~(\ref{eq:1norm_min_lp}):
\begin{equation}
\max_{\bs{y}\in\mathbb{R}^{n+1}} \, \bs{y}^T \begin{bmatrix}
    0\\ \bs{\alpha}
\end{bmatrix}
\quad \text{subject to} \quad A^T \bs{y} \leq \bs{1},
\end{equation}
where $A=\begin{bmatrix}
 \bs{1}^T & -\bs{1}^T\\ 
 h & -h
\end{bmatrix} \in \mathbb{R}^{n+1\times 2N}$, and $h\in \mathbb{R}^{n\times N}$ comprises the $n$ relevant rows of the Hadamard matrix corresponding to the independent generator of the set $\{\hat{g}_j\}_{j=1}^n$. If the $n$ generators are independent, one can always select a column of $A$ whose entries align in sign with those of $\bs{y}$. In this case, the constraint $A^T \bs{y} \leq \bs{1}$ is equivalent to $\|\bs{y}\|_1 \leq 1$. The linear program therefore becomes
\begin{equation}
\max_{\bs{y}\in\mathbb{R}^{n+1}} \, \bs{y}^T \begin{bmatrix}
    0\\ \bs{\alpha}
\end{bmatrix}
\quad \text{subject to} \quad \|\bs{y}\|_1 \leq 1,
\end{equation}
whose maximum is $\|\bs{\alpha}\|_\infty$. Consequently, for $n$ independent generators, our bound becomes:
\begin{equation} \braket{\delta\hat{q}^2} \geq \min_{\bs{a}\in\mathcal{A}}\frac{\norm{\bs{a}}_1^2}{4t^2}=\frac{\|\bs{\alpha}\|_\infty^2}{4t^2}. \end{equation}
which exactly matches the bound of Ref.~\cite{eldredge_optimal_2018}.

\subsection{Phase Sensing for Bosonic Modes}
\label{app:boson-diaglearning}

In Ref.~\cite{bringewatt_optimal_2024}, the optimal bound for estimating the linear function $q=\bs{\alpha} \cdot \bs{\theta}$ of the bosonic Hamiltonian $\hat{H} = \sum_{j=1}^m \theta_j \hat{n}_j$ is given by
\begin{equation}\label{eq:mzi_bnd}
\braket{\delta\hat{q}^2} \geq \frac{\max\{\|\bs \alpha\|_{1,+}^2, \|\bs \alpha\|_{1,-}^2\}}{P^2t^2},
\end{equation}
where $\hat{n}_j$ represents the number operator of the bosonic mode $j$, $P$ is the total number of photons, and $\|\bs \alpha\|_{1,+}$($\|\bs \alpha\|_{1,-}$) denotes the $\ell_1$-norm restricted to positive (negative) entries of $\bs \alpha$. Similar to the previous section (\cref{app:indepstab}), we now show that our bound in Eq.~(\ref{eq:Bnd_app}) reproduces this result as well. To see this, again consider the dual formulation of Eq.~(\ref{eq:1norm_min_lp}):
\begin{equation}
\max_{\bs{y}\in\mathbb{R}^{m+1}} \, \bs{y}^T \begin{bmatrix}
    0\\ \bs{\alpha}
\end{bmatrix}
\quad \text{subject to} \quad A^T \bs{y} \leq \bs{1},
\end{equation}
where $A=\begin{bmatrix}
 \bs{1}^T & -\bs{1}^T\\ 
 h_b & -h_b
\end{bmatrix} \in \mathbb{R}^{m+1\times 2N}$. Here, $h_b$ is a matrix such that $\mathrm{diag}(h_b^T \bs \theta)$ is the diagonal representation of the Hamiltonian in the basis of number states. The reason for the finite dimension $N$ is that we fix the total number of maximum photons, i.e.\ $P$. Thus, the columns of the matrix $h_b$ are all the elements in the set $\mathcal{P}= \{ \bs{p} \in \mathbb{Z}_{\ge 0}^m : \sum_j p_j \le P \}$, and this means that $N=|\mathcal{P}|=\binom{P+m}{m}$. \\
To prove the bound in \cref{eq:mzi_bnd}, let $\bs{y} = \begin{bmatrix}
    w \\\bs{u}
\end{bmatrix}$.  
Then the constraint $A^T \bs{y} \le \bs{1}$ becomes
\begin{equation}\label{eq:ut_cond}
|\bs{p}^T \bs{u} + w| \le 1 
\quad \forall\, \bs{p} \in \mathcal{P},
\end{equation}
which is equivalent to $|w|\le 1$ and
\begin{equation} 
u_j \in \left[-\frac{1 + w}{P}, \frac{1 - w}{P}\right]\quad \forall i\in\{1, \cdots ,m\}.
\end{equation}
The dual objective is $\bs{u}^\top \bs{\alpha}$, which is maximized by choosing each $u_j$ at the endpoint of its feasible interval consistent with the sign of $\alpha_j$. i.e.
\begin{equation}
u_j^\star =
\begin{cases}
(1 - w)/P, & \alpha_j > 0, \\[4pt]
-(1 + w)/P, & \alpha_j < 0.
\end{cases}
\end{equation}
Hence, for fixed $w$,
\begin{equation}
\bs{u}^{*T} \bs{\alpha}
= \frac{1 - w}{P}\|\bs{\alpha}\|_{1,+}
+ \frac{1 + w}{P}\|\bs{\alpha}\|_{1,-}.
\end{equation}
Maximizing the objective over $w \in [-1,1]$ gives
\begin{equation}
\max_{w,\,\bs{u}}\,  \bs{u}^T \bs{\alpha}
= \frac{2}{P}\max\{\|\bs{\alpha}\|_{1,+},\,\|\bs{\alpha}\|_{1,-}\}.
\end{equation}
Substituting this into Eq.~(\ref{eq:Bnd_app}) yields
\begin{equation} \braket{\delta\hat{q}^2} \geq \min_{\bs{a}\in\mathcal{A}}\frac{\norm{\bs{a}}_1^2}{4t^2}=\frac{\max\{\|\bs \alpha\|_{1,+}^2, \|\bs \alpha\|_{1,-}^2\}}{P^2t^2}, \end{equation}
which exactly matches the bound of Ref.~\cite{bringewatt_optimal_2024}.

\section{Hamiltonian Reshaping for Bosonic Phase Sensing}\label{app:bos_reshaping}

In this appendix, we argue that bosonic reshaping can be used for distributed bosonic phase estimation~\cite{proctor_multiparameter_2018, ge_distributed_2018, Guo2020:DQSphase,Liu2021:DQSphase, Kim2024:dqsPhase, bringewatt_optimal_2024} in interacting bosonic sensor networks by deriving an average-case reshaping error bound, quantified in the trace norm, for the effective evolution. This bound plays the role for bosonic systems that the QDRIFT-like Hamiltonian-simulation bound plays for finite-dimensional systems~\cite{campbell_random_2019}, and it captures the tradeoff between sampling complexity and the fidelity of the reshaped dynamics.

Consider an interacting bosonic sensor network of $m$ modes with number operators $\hat{\bm{n}}=(\hat n_1, \hat{n}_2, \dots, \hat{n}_m)$. Suppose there are a set of phases $\bm{\theta}=(\theta_1, \theta_2, \dots, \theta_m)$ encoded into the modes via
\begin{equation}\label{eq:og-Hboson}
 \hat{H}_0= \bm{\theta}\cdot\hat{\bm{n}} + \hat{H}_{\rm int}.
\end{equation}
As usual, our goal is to estimate a linear combination of the phases (or more appropriately, frequency shifts) $q=\bm{\alpha}\cdot\bm{\theta}$ to the best of our abilities, in spite of the unwanted interactions $\hat{H}_{\rm int}$. It turns out that we can perform a $\mathcal{U}(1)^m$-twirl in order to erase most of $\hat{H}_{\rm int}$, with the exception of a residual piece that is diagonal in the local Fock basis---thus reducing the problem to Diagonal Learning. Then, we can estimate the distributed phases using the Diagonal Learning protocol described in the main text,
which can achieve the limit predicted by the quantum Fisher information (QFI). This works because the error in QDRIFT simulating the interaction-free sensor state decreases polynomially with the number of random gates applied and can be made arbitrarily small, as we discuss below. 

As an introduction, consider a single-bosonic mode and the group $\mathcal{U}(1)$ generated by the bosonic number operator $\hat{n}=\hat a^\dagger \hat a$. 
A unitary representation of the group is $\hat{u}(\varphi) = e^{i\varphi \hat n}$ with $\varphi \in [0,2\pi)$. We apply these phases randomly according to the Haar measure, which is the flat distribution $\dd\mu(\varphi)=\dd\varphi/2\pi$. For a single bosonic mode, we write out the effective (symmetrized) Hamiltonian formally as
\begin{equation}
 \hat{H}_{\rm eff} = \int_0^{2\pi} \dd{\mu(\varphi)} \; \hat{u}(\varphi) \hat{H}_0 \hat{u}^\dagger(\varphi) \eqqcolon \mathbb{E}_{\varphi}[\hat{H}],
\end{equation}
This $\mathcal{U}(1)$-twirl projects $\hat{H}_0$ onto its number-conserving part, i.e., $[\hat{H}_{\rm int},\hat{n}]=0$. Likewise, for $m$ modes, the relevant group is $\mathcal{U}(1)^m$, with elements $\hat{u}(\bm\varphi)=\exp(i\bm{\varphi}\cdot \hat{\bm{n}})$. Averaging with the $m$-product Haar measure removes all terms that change local occupation numbers, such that $\,\forall\, i$ we have $[\hat{H}_{\rm eff},\hat{n}_i]=0$.

Consider the Hamiltonian from Eq.~\eqref{eq:og-Hboson}. We write the effective Hamiltonian as $\hat{H}_{\rm eff}= \bm{\theta}\cdot\hat{\bm{n}} + \hat{H}_{\rm Kerr}$, where $\hat{H}_{\rm Kerr}$ is diagonal in the local Fock basis. In particular, the effective interaction $\hat{H}_{\rm Kerr}$ must commute with \textit{any} element $\hat{u}(\bm{\varphi})=\exp(i\bm{\varphi}\cdot\hat{\bm{n}})$ and, therefore, consists of functions of $\{\hat{n}_i\}$, like Kerr terms $\hat{n}_i\hat{n}_j$. Any hopping terms $\hat a_i \hat a_j^\dagger$, squeezing terms $\hat a_i^\dagger \hat a_j^\dagger$, or linear displacements $\hat a_i$ vanish, since only terms commuting with local particle numbers $\hat n_i$ survive. In other words, bilinear interactions (apart from those linear in $\hat{n}_i$) vanish via $\mathcal{U}(1)^m$-twirling. If $\hat{H}_{\rm Kerr}$ is absent altogether, then we may directly apply distributed bosonic phase-estimation techniques~\cite{proctor_multiparameter_2018, ge_distributed_2018, bringewatt_optimal_2024} to estimate the linear combination $q=\bm{\alpha}\cdot\bm{\theta}$. Otherwise, the Diagonal Learning protocol developed here can be applied for bosonic phase sensing (see also Appendix~\ref{app:boson-diaglearning}).

We have shown that $\mathcal{U}(1)^m$-twirling effectively reduces our interacting Hamiltonian down to the sensing Hamiltonian $\hat{H}(\bm\theta)=\bm{\theta}\cdot\bm{\hat{n}}$, up to a Kerr-type term. However, there awaits an argument for the reshaping error of the effective evolution. We can apply a generalization of standard techniques for QDRIFT for finite-dimensional systems~\cite{campbell_random_2019,chen_concentration_2021} to derive a rigorous (average-case) error bound. In what follows, we provide a proof for the error bounds adapted to bosonic systems under the assumption that the Taylor series of sensor dynamics in time converges in trace norm.

Let $\hat \rho$ be a bosonic sensor state, $\hat{V}_{\bm{\varphi}}({\Delta t})= \hat{u}(\bm{\varphi}) \exp(-i\hat{H}_0{\Delta t}) \hat{u}^\dagger(\bm{\varphi})$ denote the single-step conjugated unitary, with $\hat{u}(\bm{\varphi})=\exp(i\bm{\varphi}\cdot\bm{\hat{n}})$ a unitary representation of the group $\mathcal{G}=\mathcal{U}(1)^m$, and $\Phi_{\Delta t}(\hat \rho)=\int\dd{\mu(\bm\varphi)}\hat{V}_{\bm{\varphi}}({\Delta t}) \hat \rho \hat{V}_{\bm{\varphi}}^\dagger({\Delta t})$. The target effective dynamics reads $\hat{U}_{\rm eff}({\Delta t})=\exp(-i\hat{H}_{\rm eff}{\Delta t})$. We denote off-diagonal contributions to the sensor Hamiltonian as $\delta\hat{H}= \hat{H}_0 - \hat{H}_{\rm eff}$, where $\hat{H}_{\rm eff}=\mathbb{E}_{\bm{\varphi}}[\hat{H}]$.

We find that 
\begin{equation}\label{eq:boson-bound1}
 \norm{\Phi_{\Delta t}(\hat\rho)-\hat{U}_{\rm eff}({\Delta t})\hat\rho \hat{U}_{\rm eff}^\dagger({\Delta t})}_1 \leq 2{\Delta t}^2 \norm{\mathbb{E}_{\bm{\varphi}}[\delta\hat{H}^2]\hat\rho}_1 + O(\Delta t^3),
\end{equation}
where $\delta\hat{H}= \hat{H}_0 - \hat{H}_{\rm eff}$ is the $\mathcal{U}(1)^m$-symmetry-breaking part of $\hat{H}_0$ and $\hat{H}_{\rm eff}=\mathbb{E}_{\bm{\varphi}}[\hat{H}]$. We analyze the trace norm $\norm{ \hat A}_1=\Tr[|\hat A|]$ with $|\hat A|=\sqrt{\hat A^\dagger \hat A}$ since $\hat{H}_0$ is generally unbounded (i.e., the spectral norm is infinite), thus introducing small subtleties in the analysis. It follows that
\begin{equation}\label{eq:boson-bound}
 \norm{\Phi_t(\hat\rho)-\hat{U}_{\rm eff}(t) \hat \rho \hat{U}_{\rm eff}^\dagger(t)}_1 \leq \frac{2t^2}{L} \norm{\mathbb{E}_{\bm{\varphi}}[\delta\hat{H}^2] \hat\rho}_1 + O(t^3/L^2),
\end{equation}
where $t=L{\Delta t}$ and $\Phi_t(\hat\rho)=\Phi_{\Delta t}^L(\hat\rho)$ represents $L$ compositions of $\Phi_{\Delta t}$. This bound is extremely similar to Eq.~\eqref{eq:bias_error} but now with dependence on the quantum sensor state $\hat\rho$. Again, such state dependence is expected since the bosonic Hilbert-space dimension is unbounded. If we constrain the total photon number of the sensor state to be $n _{\text{ph}}$ and the highest-order terms in $\delta\hat{H}$ (in bosonic raising and lowering operators) to be $R$, then we expect the simulation error to be bounded as 
\begin{align}
\norm{\Phi_t(\hat \rho)-\hat{U}_{\rm eff}(t)\hat \rho \hat{U}_{\rm eff}^\dagger(t)}_1 = 
O \left ( \frac{ t^2}{L} n _{\text{ph}}^{2R} \right )
.
\end{align}
\begin{proof}
 We first prove the small step Eq.~\eqref{eq:boson-bound1} and then Eq.~\eqref{eq:boson-bound} by extension. Our proof makes use of a Taylor expansion of the evolution and thus implicitly assumes that the corresponding series converges in the trace norm. This assumption is reasonable provided that $\hat \rho$ has sufficiently well-behaved moments of $\hat{H}_{0}$.

 For convenience, define the map $\mathscr{U}_{\rm eff}(\cdot)=\hat{U}_{\rm eff}({\Delta t})\cdot \hat{U}_{\rm eff}^\dagger({\Delta t})$. Expand $\hat{U}_{\rm eff}({\Delta t})$ and $\hat{V}_{\bm{\varphi}}({\Delta t})$ to $O(\Delta t^2)$:
 \begin{align}
 \mathscr{U}_{\rm eff}(\hat \rho)&=\hat \rho - i{\Delta t}\comm{\hat{H}_{\rm eff}}{\hat \rho}+{\Delta t}^2\left(\hat{H}_{\rm eff} \hat \rho \hat{H}_{\rm eff}-\frac{1}{2}\acomm{\hat{H}_{\rm eff}^2}{\hat \rho}\right) + O(\Delta t^3), \\
 \hat{V}_{\bm{\varphi}}({\Delta t})\hat \rho \hat{V}_{\bm{\varphi}}^\dagger({\Delta t}) &= \hat \rho - i{\Delta t}\comm{\hat{H}_{\bm{\varphi}}}{\hat \rho}+{\Delta t}^2\left(\hat{H}_{\bm{\varphi}} \hat \rho \hat{H}_{\bm{\varphi}}-\frac{1}{2}\acomm{\hat{H}_{\bm{\varphi}}^2}{\hat \rho}\right) + O(\Delta t^3),
 \end{align}
 with shorthand $\hat{H}_{\bm{\varphi}}=\hat{u}(\bm\varphi)\hat{H}_0\hat{u}^\dagger(\bm\varphi)$. Averaging over the Haar measure $\dd{\mu(\bm\varphi)}$:
 \begin{equation}\label{eq:diffrho1}
 \Phi_{\Delta t}(\hat\rho)-\mathscr{U}_{\rm eff}(\hat\rho)={\Delta t}^2\left(\mathbb{E}_{\bm{\varphi}}\left[\hat{H}\hat\rho \hat{H}-\frac{1}{2}\acomm{\hat{H}^2}{\hat\rho}\right]-\hat{H}_{\rm eff} \rho \hat{H}_{\rm eff}+\frac{1}{2}\acomm{\hat{H}_{\rm eff}^2}{\hat\rho}\right) + O(\Delta t^3).
 \end{equation}
 Generically, we may write $\hat{H}=\hat{H}_{\rm eff} + \delta\hat{H}$, with $\delta\hat{H}$ representing the symmetry-breaking piece of $\hat{H}_0$, such that $\mathbb{E}_{\bm{\varphi}}[\delta\hat{H}]=0$. Then Eq.~\eqref{eq:diffrho1} simplifies to
 \begin{equation}\label{eq:diffrho2}
 \Phi_{\Delta t}(\hat\rho)-\mathscr{U}_{\rm eff}(\hat\rho) = {\Delta t}^2 \mathbb{E}_{\bm{\varphi}}\left[\delta\hat{H}\hat\rho \delta\hat{H}-\frac{1}{2}\acomm{\delta\hat{H}^2}{\hat\rho}\right] + O(\Delta t^3).
 \end{equation}
 Invoking \textit{(i)} the triangle inequality, \textit{(ii)} $\norm{\hat{X}}_1 = \norm{\hat{X}^\dagger}_1$ for any $\hat{X}$, and \textit{(iii)} Hermiticity of $\delta\hat{H}$ and $\hat\rho$, we find
 \begin{equation}\label{eq:diff-norm1}
 \norm{\Phi_{\Delta t}(\hat\rho)-\mathscr{U}_{\rm eff}(\hat\rho)}_1 \leq {\Delta t}^2 \left(\norm{\mathbb{E}_{\bm{\varphi}}[\delta\hat{H}\hat\rho \delta\hat{H}]}_1 + \norm{\mathbb{E}_{\bm{\varphi}}[\delta\hat{H}^2\hat\rho]}_1\right) + O(\Delta t^3).
 \end{equation}
 Given $\norm{\hat{X} \hat\rho \hat{X}}_1 \leq \norm{\hat{X}^2 \hat\rho}_1$ for $\hat{X}$ Hermitian, one may show that $\norm{\mathbb{E}_{\bm{\varphi}}[\delta\hat{H}\hat\rho \delta\hat{H}]}_1 \leq \norm{\mathbb{E}_{\bm{\varphi}}[\delta\hat{H}^2]\hat\rho}_1$. We substitute this inequality into the right-hand-side of Eq.~\eqref{eq:diff-norm1} and compute
 \begin{equation}\label{eq:diff-norm2}
 \norm{\Phi_{\Delta t}(\hat\rho)-\mathscr{U}_{\rm eff}(\hat\rho)}_1 \leq 2{\Delta t}^2 \norm{\mathbb{E}_{\bm{\varphi}}[\delta\hat{H}^2]\hat\rho}_1 + O(\Delta t^3).
 \end{equation}
 
This is for a single step. We need a relation for the total time $t=L\Delta t$. Construct the map $\Phi_t=\Phi_{\Delta t}^L$, which we compare to $\mathscr{U}_{\rm eff}^L(\cdot)=\hat{U}_{\rm eff}(t)\cdot \hat{U}_{\rm eff}^\dagger(t)$. Apply the telescoping identity,
 \begin{equation}
     \Phi_{t} - \mathscr{U}_{\rm eff}^L = \sum_{k=0}^{L-1} \Phi_{\Delta t}^{\,L-1-k}\,(\Phi_{\Delta t}-\mathscr{U}_{\rm eff})\,\mathscr{U}_{\rm eff}^{\,k},
 \end{equation}
to the initial state $\hat\rho$ and take the trace norm. Then, using the triangle inequality and contractivity of completely positive trace-preserving maps:
\begin{align}
\norm{\Phi_{t}(\hat\rho)-\mathscr{U}_{\rm eff}^L(\hat\rho)}_1
&= \norm{ \sum_{k=0}^{L-1} \Phi_{\Delta t}^{\,L-1-k}(\Phi_{\Delta t}-\mathscr{U}_{\rm eff})\mathscr{U}_{\rm eff}^{\,k}(\hat\rho)}_1 \\
&\leq \sum_{k=0}^{L-1}  \norm{\Phi_{\Delta t}^{\,L-1-k}(\Phi_{\Delta t}-\mathscr{U}_{\rm eff})\mathscr{U}_{\rm eff}^{\,k}(\hat\rho)}_1\\ 
&\leq \sum_{k=0}^{L-1} \norm{(\Phi_{\Delta t}-\mathscr{U}_{\rm eff})\mathscr{U}_{\rm eff}^{\,k}(\hat\rho)}_1 \\
&=  \sum_{k=0}^{L-1} \norm{\Phi_{\Delta t}\left(\mathscr{U}_{\rm eff}^{\,k}(\hat\rho)\right)-\mathscr{U}_{\rm eff}\left(\mathscr{U}_{\rm eff}^{\,k}(\hat\rho)\right)}_1.
\end{align}
Invoking Eq.~\eqref{eq:diff-norm2} to each summand with input state $\mathscr{U}_{\rm eff}^{\,k}(\hat\rho)$:
\begin{equation}
    \norm{\Phi_{t}(\hat\rho)-\mathscr{U}_{\rm eff}^L(\hat\rho)}_1
 \leq \sum_{k=0}^{L-1} 2{\Delta t}^2 \norm{\mathbb{E}_{\bm{\varphi}}[\delta\hat{H}^2]\mathscr{U}_{\rm eff}^{\,k}(\hat\rho)}_1 + O(L\Delta t^3). 
\end{equation}
Now observe that any operator $\mathbb{E}_{\bm{\varphi}}[\hat{O}]$ is diagonal in the local Fock basis and, thus, commutes with $\hat{U}_{\rm eff}$ since $\hat{H}_{\rm eff}$ is likewise diagonal in the local Fock basis. Therefore, $\mathbb{E}_{\bm{\varphi}}[\delta\hat{H}^2]\mathscr{U}_{\rm eff}^{\,k}(\hat\rho)=\mathscr{U}_{\rm eff}^{\,k}(\mathbb{E}_{\bm{\varphi}}[\delta\hat{H}^2]\hat\rho)$, which implies
\begin{align}
    \norm{\Phi_{t}(\hat\rho)-\mathscr{U}_{\rm eff}^L(\hat\rho)}_1 &\leq \sum_{k=0}^{L-1} 2{\Delta t}^2 \norm{\mathbb{E}_{\bm{\varphi}}[\delta\hat{H}^2]\mathscr{U}_{\rm eff}^{\,k}(\hat\rho)}_1 + O(L\Delta t^3) \\
    &= \sum_{k=0}^{L-1} 2{\Delta t}^2 \norm{\mathscr{U}_{\rm eff}^{\,k}(\mathbb{E}_{\bm{\varphi}}[\delta\hat{H}^2]\hat\rho)}_1 + O(L\Delta t^3) \\
    &= \sum_{k=0}^{L-1} 2{\Delta t}^2 \norm{\mathbb{E}_{\bm{\varphi}}[\delta\hat{H}^2]\hat\rho}_1 + O(L\Delta t^3) \\
    &= 2 L {\Delta t}^2 \norm{\mathbb{E}_{\bm{\varphi}}[\delta\hat{H}^2]\hat\rho}_1 + O(L\Delta t^3).
\end{align}
The third line derives from unitary invariance of the trace norm. Since $t=L\Delta t$, the result [Eq.~\eqref{eq:boson-bound}] follows.
\end{proof}

\section{Alternative Protocols for Interacting Sensing}\label{app:alternative-protocols}
Here, we study two alternative protocols for the Interacting Sensing problem, with various restrictions on the allowed resources in each case.

For our first alternative protocol, we use a product initial state, apply only single-qubit control during evolution, and perform single-qubit measurements. For estimating the parameter $q = \sum_i \alpha_i \theta_i$, we measure each $\theta_i$ individually and then compute the corresponding weighted sum. To measure $\theta_i$ on the $i$-th qubit, we reshape the Hamiltonian to an effective form $\theta_i \hat{Z}_i$ by applying random Pauli operators to all other qubits. 
As a result, the effective Hamiltonian becomes
\begin{equation}
 \hat{H}_\text{eff} = \theta_i \hat{Z_i}.
\end{equation}
We then prepare the $i$-th qubit in the $\ket{+}$ state and let it evolve under this effective Hamiltonian for a time $t/n$. This procedure allows us to estimate $\theta_i$ with variance ${n^2}/{4\nu t^2}$, where $\nu$ is the number of repetitions. Repeating this procedure for each $i$, we estimate $q$ with total variance
\begin{equation}
\braket{\delta\hat{q}^2} = \frac{Q_1\| \bs{a}\|^2_1}{4\nu t^2}, \quad Q_1 = n^2 \frac{ \| \bs\alpha\|^2_2}{\|\bs a\|_1^2},
\end{equation}
where $Q_1$ denotes the relative variance compared to the optimal protocol. Note this is not an optimal protocol for $n>1$, since $\norm{\bs{a}}_1 \leq \sqrt{N}\norm{\bs{a}}_2 = \norm{\bs{\alpha}}_2$, with the latter equality following from Plancherel's theorem. Thus, $Q_1 \geq n^2$, which provides an upper bound on the best achievable variance for $\hat{q}_{\text{est}}$ when restricting to single-qubit control and measurement.

In the second alternative protocol, we again begin with a product state and single-qubit control, but conclude with an entangled measurement. 
 As in the optimal protocol, by using only single-qubit control, we can reshape the Hamiltonian such that its generators are in the group $\mathcal{S} = \{ \hat{I},\hat{Z}\}^{\otimes n}$. The resulting effective Hamiltonian is
\begin{equation}
 \hat{H}_\text{eff} = \sum_{i=1}^n \theta_i \hat{Z}_i + \sum_{j\,|\,\hat{s}_j \in \mathcal{S}/\{\hat{Z}_i\}} \gamma_j \hat{s}_j. \label{eq:H_prod}
\end{equation}
Consider the initial state $\ket{+}^{\otimes n}$. Assuming no additional control or ancilla qubits, the system evolves under $\hat{H}_\text{eff}$ for time $t$, leading to the state $\ket{\psi(t)} = \hat{U}_{\rm eff}(t) \ket{+}^{\otimes n}$, where $\hat{U}_{\rm eff}(t) = e^{-i \hat{H}_\text{eff} t}$. This state encodes full information about any linear function of $\bs{\theta}$, as measuring in the $X$ basis allows one to learn all the parameters of the Hamiltonian in \cref{eq:H_prod}. 
After $\nu$ repetitions, the quantum Fisher information matrix in the limit $q \rightarrow 0$ may be computed to be $\mathcal{F}_{ij} = 4\delta_{ij}$, so that the quantum Cram{\'e}r-Rao bound is given by
\begin{equation}
\braket{\delta\hat{q}^2} \geq \bs\alpha^T \mathcal{F}^{-1} \bs\alpha = 
 \frac{Q_2\norm{\bs a}_1^2}{4\nu t^2}, \quad Q_2 = \frac{\| \bs\alpha\|^2_2}{\|\bs a\|_1^2},
\end{equation}
and we generically have $Q_2 \geq 1$.
To saturate this bound, an optimal measurement must be performed. Such a measurement is given by the projective measurement onto the state $\ket{\varphi}$, defined by
\begin{equation}
 \begin{cases}
 \hat{\mathcal{M}}_0 = \hat{I} - \dyad{\varphi} \\ 
 \hat{\mathcal{M}}_1 = \dyad{\varphi}
 \end{cases}, \quad 
 \ket{\varphi} = \frac{1}{\| \bs \alpha\|_2} \sum_{i=1}^n \alpha_i \hat{Z}_i \ket{+}^{\otimes n}.
\end{equation}
Computing the corresponding variance after $\nu$ independent measurements for $\hat{q}_\text{est}$ yields
\begin{equation}
\braket{\delta\hat{q}^2} = \frac{\| \bs \alpha\|_2^2}{4\nu t^2},
\end{equation}
saturating the above Cramér–Rao bound for our product initial state. Thus, allowing entangled measurements at the end of each experiment yields a precision improvement compared to protocols limited to single-qubit control and measurement.

\end{document}